\theoremstyle{plain}%
\newtheorem{theorem}{Theorem}[section]
\newtheorem{lemma}[theorem]{Lemma}
\theoremstyle{definition}
\theoremstyle{remark}
\newtheorem{remark}{Remark}
\newcommand{\kraxn}[2]{\lceil#1\rfloor^{#2}} 
\newcommand{%
	\scalebox{}{\input{}}  
}[2]{%
	\scalebox{#1}{\input{#2}}  
}
\newcommand{\vect}[1]{\boldsymbol{#1}}
\newcommand{\T}{^{\mathrm{T}}}
\newcommand{\pert}{\Delta}
\newcommand{\krax}[2]{\lfloor#1\rceil^{#2}}
\def\kraxn#1{\def\tempb{#1}\futurelet\next\kraxn@i}%
\def\kraxn@i{\ifx\next\bgroup\expandafter\kraxn@ii\else\expandafter\kraxn@end\fi}%
\def\kraxn@ii#1{{\sqrt{|\tempb|}\text{sign}\left(\tempb\right)}}%
\def\kraxn@end{{\text{sign}\left(\tempb\right)}}%
\newcommand{\sign}[1]{\text{sign}(#1)}
\newcommand{\ddt}[1]{\frac{\text{d}#1}{\text{d}t}}
\def\dert#1{\def\tempa{#1}\futurelet\next\dert@i}%
\def\dert@i{\ifx\next\bgroup\expandafter\dert@ii\else\expandafter\dert@end\fi}%
\def\dert@ii#1{{\dot{\tempa}_{#1}}}%
\def\dert@end{{\dot{\tempa}}}%
\DeclarePairedDelimiter\abs{\lvert}{\rvert}%
\DeclarePairedDelimiter\norm{\lVert}{\rVert}%
\let\oldabs\abs
\def\abs{\@ifstar{\oldabs}{\oldabs*}}
\let\oldnorm\norm
\def\norm{\@ifstar{\oldnorm}{\oldnorm*}}
\newcommand{\rank}{\text{rank }}
\newcommand{\obsv}{\boldsymbol{\mathcal{O}}}
\newcommand{\obsvred}{\obsv_\text{R}}
\newcommand{\Abeta}{\vect{H}}
\newcommand{\bbeta}{\vect{w}}
\newcommand{\comma}{\text{,}}
\newcommand\undermat[2]{%
	\makebox[0pt][l]{$\smash{\underbrace{\phantom{%
					\begin{matrix}#2\end{matrix}}}_{\text{$#1$}}}$}#2}
\begin{document}

\title{Higher-order sliding mode observer design for linear time-invariant multivariable systems based on a new observer normal form}

\author{
\name{Helmut Niederwieser\textsuperscript{a,b,}*\thanks{*Corresponding author. Email: helmut.niederwieser@tugraz.at}, Markus Tranninger\textsuperscript{a}, Richard Seeber\textsuperscript{c} and Markus Reichhartinger\textsuperscript{a}}
\affil{\textsuperscript{a}Institute of Automation and Control, Graz University of Technology, Graz, Austria;\\ \textsuperscript{b}BEST - Bioenergy and Sustainable Technologies GmbH, Graz, Austria;\\
\textsuperscript{c}Christian Doppler Laboratory for Model Based Control of Complex Test Bed Systems, Institute of Automation and Control, Graz University of Technology, Graz, Austria;}
}

\maketitle

\begin{abstract}
In various applications in the field of control engineering the estimation of the state variables of dynamic systems in the presence of unknown inputs plays an important role. Existing methods require the so-called observer matching condition to be satisfied, rely on the boundedness of the state variables or exhibit an increased observer order of twice the plant order. In this article, a novel observer normal form for strongly observable linear time-invariant multivariable systems is proposed. In contrast to classical normal forms, the proposed approach also takes the unknown inputs into account. The proposed observer normal form allows for the straightforward construction of a higher-order sliding mode observer, which ensures global convergence of the estimation error within finite time even in the presence of unknown bounded inputs. Its application is not restricted to systems which satisfy the aforementioned limitations of already existing unknown input observers. The proposed approach can be exploited for the reconstruction of unknown inputs with bounded derivative and robust state-feedback control, which is shown by means of a tutorial example. Numerical simulations confirm the effectiveness of the presented work.
\end{abstract}

\begin{keywords}
multivariable systems; normal forms; robust state estimation; sliding mode observation; unknown input observer
\end{keywords}

\section{Introduction}\label{sec:introduction}

The robust estimation of the state variables of dynamic systems from its outputs in the presence of unmeasured external disturbances and model uncertainties is of great interest in various applications in the field of control engineering, e.g. when dealing with fault detection problems \citep{Patton1989} or robust state-feedback control \citep{Yang2017}. One approach is to regard such external disturbances and model uncertainties as unknown inputs and to subsequently apply robust estimation methods, which are referred to as unknown input observers.

Kalman filter based state estimators, such as for example the classical Kalman filter \citep{Kalman1960b,Kalman1961} in the case of linear systems and the extended Kalman filter \citep{Sorenson1985} as well as the unscented Kalman filter \citep{Julier1997,Wan2000} in the case of nonlinear systems, allow to model unknown inputs as a Gaussian process noise with zero mean.
High-gain observers \citep{Khalil2014} suppress the influence of unknown inputs by means of a linear output injection with high gain.
In contrast to the aforementioned linear methods, sliding mode based observers are able to cope with a wider range of input uncertainties \citep{Shtessel2013,Utkin1999,Koshkouei2004,Fridman2006,De2013,Li2018}, e.g. when dealing with unknown inputs with bounded amplitude or bounded derivatives.

The applicability of conventional first-order sliding mode based approaches \citep{utkin2009sliding,edwards1998sliding,walcott1987state} as well as linear Luenberger like observers \citep{Darouach1994,Kudva1980,Yang1988,HAUTUS1983353} is limited to strongly detectable systems which additionally satisfy a rank condition often referred to as the observer matching condition.
In contrast, so-called step-by-step sliding mode observers and higher order sliding mode approaches do not require the observer matching condition to be fulfilled. Step-by-step sliding mode observers transform the system in a triangular structure and successively apply either classical first-order \citep{Barbot1996,Floquet2004} or second-order sliding mode methods \citep{Floquet2007,Floquet2007a} for reconstruction purposes.
However, in order to achieve global convergence of those observers, the boundedness of certain derivatives of the outputs is required which in general depend on the state variables. Hence, only local asymptotic stability of the estimation error dynamics is obtained in general.

Higher order sliding mode observers typically rely on the robust exact differentiator (RED) \citep{Levant1998,Levant2003}. Whenever it is possible to express the state variables by means of the outputs, their derivatives and the control inputs only, the RED can be directly applied as a state estimator. However, in general, also this procedure only permits local asymptotic stability of the estimation error dynamics.
In order to overcome this issue, a cascaded observer consisting of a non-robust observer and an RED is applied \citep{Fridman2011,Fridman2006,Shtessel2013,Tranninger2018,Tranninger2019}. This cascaded observer achieves global convergence in finite time in the presence unknown bounded inputs. 
However, the observer order is substantially larger than the order of the considered plant. Moreover, by the use of the additional non-robust observer, the number of observer parameters is increased. Since the parameters of the RED depend on the parameters of the additional observer as well as the characteristics of the unknown inputs, the tuning of those observer parameters may be cumbersome.

Recently, an RED-based unknown input observer for strongly observable linear time-invariant (LTI) systems with one single unknown input and one single output has been proposed \citep{Niederwieser2019}. It is based on the classical observability canonical form of linear systems.
This approach requires neither the application of an cascaded observer scheme nor the boundedness of the state variables of the plant for global convergence.
However, an extension to systems with multiple unknown inputs and multiple outputs is not available yet. In contrast to the single-input single-output case, existing normal forms are not suitable for the construction of such an observer. The well-known classical observability canonical form by Luenberger~\citep{Luenberger1967, Gupta1974}, which is based on observability indices, does not take the impact of the unknown inputs into account. The so-called special coordinate basis \citep{Chen2004, Sannuti1987} decomposes the system into coupled chains of integrators. However, the application of the RED as a state observer would again require the boundedness of the state variables for global convergence.

The contribution of this article is the development of a novel observer normal form for strongly observable LTI systems with multiple unknown inputs and multiple outputs. In this representation the system exhibits a lower block triangular structur up to an output injection which allows for a straightforward construction of an RED-based unknown input observer. The proposed observer does not rely on the boundedness of the state variables and does not additionally increase the observer order beyond the order of the plant. Moreover, the observer matching condition, which is prerequisite for the existence of many classical unknown input observers, is not required.

A detailed description of the considered problem statement is given in Section \ref{sec:problem_statement}. Section \ref{sec:proposed_canonical_form} presents the novel observer normal form developed for strongly observable LTI systems and a robust observer for systems in the proposed observer normal form. A description of the transformation into the proposed observer normal form is provided in Section \ref{sec:transformation_into_canonical_form}. The applicability of the presented approach is demonstrated by means of a simulation example in Section \ref{sec:Simulation_examples}. In addition, based on the provided example, it is shown how to exploit the presented approach for reconstructing the unknown input. Finally, Section \ref{sec:Conclusion} concludes the presented work.

\subsection{Preliminaries}\label{ssec:notation}

In order to abbreviate sign preserving power functions, the notation
\begin{align*}
\krax{\cdot}{\gamma}=|\cdot|^{\gamma}\sign{\cdot}\quad\text{and particularly} \quad	\krax{\cdot}{0}=\sign{\cdot}
\end{align*}
is used in this article. The notation
\begin{align*}
a \leftarrow f(a)
\end{align*}
refers to an update step for the value of $a$ required in the description of iterative algorithms, i.e., the new value of $a$ is some function $f(\cdot)$ of the old value of $a$.

The time-derivative $\ddt{}x$ is also denoted by $\dot{x}$. Moreover, the solutions of differential equations with discontinuous right-hand side are understood in the sense of Filippov \citep{Filippov2013a}.
The $i^{\text{th}}$ canonical unit vector is denoted by $\vect{e}_i$.
Furthermore, $\vect{I}_i$ refers to the $i \times i$ identity matrix and $\vect{0}_{j \times k}$ is the zero matrix of dimension $j \times k$.

\section{Problem statement}\label{sec:problem_statement}

Consider the strongly observable LTI system
\begin{align}\label{eq:original_system}
\begin{split}
\dot{\vect{x}} &=  \vect{A}\vect{x} + \vect{D}\vect{\pert}\text{,}\\
\vect{y} &= \vect{C}\vect{x}\comma
\end{split}
\end{align}
where ${\vect{x} = \begin{bmatrix} x_1 & x_2 & \dots & x_n \end{bmatrix}\T}$ is the $n$-dimensional state vector, ${\vect{\pert} = \begin{bmatrix} \pert_1 & \pert_2 & \dots & \pert_m \end{bmatrix}\T}$ denotes the vector of possibly time-varying, unknown inputs and ${\vect{y} = \begin{bmatrix} y_1 & y_2 & \dots & y_p \end{bmatrix}\T}$ is the output vector.
The amplitudes of the unknown inputs are assumed to be bounded, i.e.,
\begin{align}\label{eq:perturbation_bounds}
\sup_t \abs{\pert_i} \leq L_i \qquad \text{with} \qquad L_i > 0\comma \quad i = 1, \dots, m.
\end{align}
The unknown-input matrix $\vect{D} \in \mathbb{R}^{n \times m}$ and the output matrix $\vect{C} \in \mathbb{R}^{p \times n}$ are partitioned into column and row vectors, respectively, as
\begin{align}\label{eq:original_perturbation_and_output_matrix}
\vect{D} =
\begin{bmatrix}
\vect{d}_1 &
\vect{d}_2 &
\dots &
\vect{d}_m
\end{bmatrix}
\comma \qquad
\vect{C} =
\begin{bmatrix}
\vect{c}_1\T \\
\vect{c}_2\T \\
\vdots \\
\vect{c}_p\T \\
\end{bmatrix}.
\end{align}
Without loss of generality, the columns of $\vect{D}$ as well as the rows of $\vect{C}$ are assumed to be linearly independent, i.e., $\rank\vect{D} = m$ and $\rank\vect{C} = p$.
As stated above, the considered system is assumed to be strongly observable \citep{HAUTUS1983353}, i.e., the Rosenbrock matrix
\begin{align}\label{eq:Rosenbrock_matrix}
\vect{P}(s) =
\begin{bmatrix}
s \vect{I}_n - \vect{A} & -\vect{D} \\
\vect{C} & \vect{0}
\end{bmatrix}
\end{align}
satisfies
\begin{align}\label{eq:strong_observability_condition_rosenbrock}
\rank\vect{P}(s) = n + m \qquad \forall s \in \mathbb{C}.
\end{align}
The goal is to provide an exact estimate of the state vector $\vect{x}$ of system \eqref{eq:original_system} from the measured outputs $\vect{y}$ despite the unknown inputs $\vect{\pert}$. Note that existing state estimation methods for the considered system class have the following drawbacks:
\begin{itemize}
	\item Step-by-step sliding mode observers \citep{Barbot1996,Floquet2004,Floquet2007,Floquet2007a} as well as the direct application of the RED \citep{Levant1998,Levant2003} as a state observer require the boundedness of the plant's state variables for global convergence.
	\item Combined observers \citep{Fridman2006,Shtessel2013,Tranninger2018,Tranninger2019} consist of an RED and a Luenberger observer and ensure global convergence also in the case of unbounded state variables. However, the observer order and the number of tuning parameters are twice the system order.
	\item Classical Luenberger like observers \citep{Darouach1994,Kudva1980,Yang1988,HAUTUS1983353} as well as conventional first-order sliding mode based methods \citep{utkin2009sliding,edwards1998sliding,walcott1987state} require the so-called observer matching condition
	\begin{align}\label{eq:observer_matching_condition}
	\rank\vect{C}\vect{D} = \rank \vect{D}
	\end{align}
	to be satisfied.
\end{itemize} 
This article aims for the development of a novel unknown input observer which neither relies on bounded state variables, nor additionally increases the observer order beyond the order of the plant, nor requires the observer matching condition \eqref{eq:observer_matching_condition} to be satisfied.
For this purpose, a new observer normal form is proposed in Section \ref{sec:proposed_canonical_form} where the system is represented by means of $p$ coupled single-output systems which allow for a straightforward design of a robust observer. The corresponding transformation is derived in Section \ref{sec:transformation_into_canonical_form}.
\begin{remark}\label{rem:system_with_direct_feedthrough}
	In the case of a strongly observable system with direct feed-through, i.e.,
	\begin{align}\label{eq:system_with_direct_feed_through}
	\begin{split}
	\dot{\vect{x}} &= \vect{A}\vect{x} + \vect{D}\vect{\pert}\text{,}\\
	\vect{y} &= \vect{C}\vect{x} + \vect{F}\vect{\pert}\comma
	\end{split}
	\end{align}
	the state estimation problem can be reformulated in terms of a system without direct feed-through \citep[Section 5.4]{Chen2004}.
	This is achieved by output and input transformations
	\begin{align}\label{eq:system_with_direct_feed_through_transformations}
	\begin{split}
	\vect{\tilde{y}} = \begin{bmatrix}	\vect{\tilde{y}}_0 \\ \vect{\tilde{y}}_1 \end{bmatrix} 	= \vect{U}\vect{y}\comma \qquad
	\vect{\tilde{\pert}} = \begin{bmatrix}	\vect{\tilde{\pert}}_0 \\ \vect{\tilde{\pert}}_1 \end{bmatrix} 	= \vect{V}^{-1}\vect{\pert}\comma
	\end{split}
	\end{align}
	which allow to represent system \eqref{eq:system_with_direct_feed_through} in the form
	\begin{align}\label{eq:system_with_direct_feed_through_reduced}
	\begin{split}
	&\dot{\vect{x}} = \vect{\tilde{A}}\vect{x} + \vect{\tilde{D}}_1\vect{\tilde{\pert}}_1 + \vect{\tilde{D}}_0\vect{\tilde{y}}_0\text{,}\\
	&\vect{\tilde{y}}_1 = \vect{\tilde{C}}_1\vect{x}.
	\end{split}
	\end{align}
	A detailed description of this system transformation is given in Appendix \ref{app:remove_direct_feedthrough}. Since $\vect{\tilde{y}}_0$ is known it can be easily considered in the observer design. Hence, this term does not affect the construction of the observer. Furthermore, system \eqref{eq:system_with_direct_feed_through_reduced} is strongly observable, if the original system \eqref{eq:system_with_direct_feed_through} is strongly observable, which is shown in Appendix \ref{app:invariance_wrt_strong_observability}.
	Thus, the observer design algorithm proposed in this article is also applicable to systems with non-vanishing direct feed-through.
\end{remark}

\section{Observer normal form and robust observer design}\label{sec:proposed_canonical_form}

For the purpose of the robust observer design, a state transformation, which is described in detail in Section \ref{sec:transformation_into_canonical_form}, transforms the system \eqref{eq:original_system} into a novel observer normal form for multivariable LTI systems with unknown inputs. The proposed observer normal form introduced in Section \ref{ssec:proposed_canonical_form} decomposes the system into $p$ coupled single-output subsystems. Based on this observer normal form, a robust observer is designed in Section \ref{ssec:robust_observer_design} which allows for an exact finite-time reconstruction of the system's state variables.

\subsection{Proposed observer normal form}\label{ssec:proposed_canonical_form}

The main result of this article given by the proposed observer normal form is summarized in
\begin{theorem}\label{theorem:existence}
Let the LTI system \eqref{eq:original_system} be strongly observable, i.e., suppose its Rosenbrock matrix satisfies condition \eqref{eq:strong_observability_condition_rosenbrock}.
Then, there exist non-singular transformation matrices ${\vect{T}\in \mathbb{R}^{n\times n}}$ and ${\vect{\Gamma} \in \mathbb{R}^{p \times p}}$ such that the state transformation ${\vect{\bar{x}} = \vect{T}^{-1} \vect{x}}$ and the output transformation ${\vect{\bar{y}} = \vect{\Gamma} \vect{y}}$
yield the system in observer normal form
\begin{align}\label{eq:transformed_system}
    \begin{split}
        \overset{\vect{.}}{\vect{\bar{x}}} &= \vect{\bar{A}}\vect{\bar{x}} + \vect{\bar{D}}\vect{\pert}\text{,}\\
        \vect{\bar{y}} &= \vect{\bar{C}}\vect{\bar{x}}\comma
    \end{split}
\end{align}
with the dynamic matrix
\begin{align}\label{eq:transformed_dynamic_matrix}
    \begin{split}
        \vect{\bar{A}} =
        \resizebox{.9\width}{!}{$
            \left[\begin{array}{ccccc|ccccc|c|ccccc}
                \alpha_{1,1} & 1              & 0       & \cdots  & 0                   & \alpha_{2,1} & 0             & \cdots  & \cdots  & 0                   &        & \alpha_{p,1} & 0              & \cdots      & \cdots  & 0        \\
                \alpha_{1,2} & 0              & 1       & \ddots  & \vdots              & \alpha_{2,2} & \vdots        &         &         & \vdots              &        & \alpha_{p,2} & \vdots         &             &         & \vdots   \\
                \vdots       & \vdots         & \ddots  & \ddots  & 0                   & \vdots       & \vdots        &         &         & \vdots              &        & \vdots       & \vdots         &             &         & \vdots   \\
                \vdots       & \vdots         &         & \ddots  & 1                   & \vdots       & \vdots        &         &         & \vdots              &        & \vdots       & \vdots         &             &         & \vdots   \\
                \vdots       & 0              & \cdots  & \cdots  & 0                   & \vdots       & 0             & \cdots   & \cdots & 0                   &        & \vdots       & 0              & \cdots      & \cdots  & 0        \\
                \hline
                \vdots       & 0              & \cdots  & \cdots  & 0                   & \vdots       & 1             & 0       & \cdots  & 0                   &        & \vdots       & 0              & \cdots      & \cdots  & 0        \\
                \vdots       & \vdots         &         &         & \vdots              & \vdots       & 0             & 1       & \ddots  & \vdots              &        & \vdots       & \vdots         &             &         & \vdots   \\
                \vdots       & \vdots         &         &         & \vdots              & \vdots       & \vdots        & \ddots  & \ddots  & 0                   &        & \vdots       & \vdots         &             &         & \vdots   \\
                \vdots       & 0              & \cdots  & \cdots  & 0                   & \vdots       & \vdots        &         & \ddots  & 1                   &        & \vdots       & \vdots         &             &         & \vdots   \\
                \vdots       & \beta_{1,2,1}  & \cdots  & \cdots  & \beta_{1,2,\mu_1-1} & \vdots       & 0             & \cdots  & \cdots  & 0                   &        & \vdots       & 0              & \cdots      & \cdots  & 0        \\
                \hline
                \vdots       &                &         &         &                     & \vdots       &               &         &         &                     & \ddots & \vdots       &                &             &         &          \\   
                \hline
                \vdots       & 0              & \cdots  & \cdots  & 0                   & \vdots       & 0             & \cdots  & \cdots  & 0                   &        & \vdots       & 1              & 0           & \cdots  & 0        \\
                \vdots       & \vdots         &         &         & \vdots              & \vdots       & \vdots        &         &         & \vdots              &        & \vdots       & 0              & 1           & \ddots  & \vdots   \\
                \vdots       & \vdots         &         &         & \vdots              & \vdots       & \vdots        &         &         & \vdots              &        & \vdots       & \vdots         & \ddots      & \ddots  & 0        \\
                \vdots       & 0              & \cdots  & \cdots  & 0                   & \vdots       & 0             & \cdots  & \cdots  & 0                   &        & \vdots       & \vdots         &             & \ddots  & 1        \\
                \alpha_{1,n} & \beta_{1,p,1}  & \cdots  & \cdots  & \beta_{1,p,\mu_1-1} & \alpha_{2,n} & \beta_{2,p,1} & \cdots  & \cdots  & \beta_{2,p,\mu_2-1} &        & \alpha_{p,n} & 0              & \cdots      & \cdots  & 0        \\
            \end{array}\right]$}
        \comma
    \end{split}
\end{align}
the unknown-input matrix
\begin{align}\label{eq:transformed_disturbance_matrix}
    \begin{split}
        \vect{\bar{D}} =
        \resizebox{.9\width}{!}{$
            \left [ \begin{array}{ccc}
                0 & \cdots & 0 \\
                \vdots & & \vdots \\
                0 & \cdots & 0 \\
                \bar{d}_{\mu_1,1} & \cdots & \bar{d}_{\mu_1,m} \\
                \hline
                0 & \cdots & 0 \\
                \vdots & & \vdots \\
                0 & \cdots & 0 \\
                \bar{d}_{\mu_1+\mu_2,1} & \cdots & \bar{d}_{\mu_1+\mu_2,m} \\
                \hline
                \vdots & & \vdots \\
                \hline
                0 & \cdots & 0 \\
                \vdots & & \vdots \\
                0 & \cdots & 0 \\
                \bar{d}_{n,1} & \cdots & \bar{d}_{n,m}
            \end{array} \right ]
            $}
    \end{split}
\end{align}
and the output matrix
\begin{align}\label{eq:transformed_output_matrix}
    \begin{split}
        \vect{\bar{C}} =
        \resizebox{.9\width}{!}{$
            \left [ \begin{array}{ccccc|ccccc|c|ccccl}
                1      & 0      & \cdots & \cdots  & 0 & 0      & \cdots & \cdots & \cdots & 0      & \cdots & 0      & \cdots & \cdots & \cdots  & 0      \\
                0      & \cdots & \cdots & \cdots  & 0 & 1      & 0      & \cdots & \cdots & 0      & \cdots & 0      & \cdots & \cdots & \cdots  & 0      \\
                \vdots &        &        &         &   &        &        &        &        &        & \ddots &        &        &        &         & \vdots \\   
                \undermat{\mu_1}{0 & \cdots & \cdots & \cdots & 0}  &  \undermat{\mu_2}{0 & \cdots & \cdots & \cdots & 0}  & \cdots  &  \undermat{\mu_p}{1 & 0 & \cdots & \cdots & 0 \textcolor{white}{m} }    
            \end{array} \hspace*{-0.4cm}\right ]
            $}
        .
    \end{split}
\end{align}
\\[18pt]
Furthermore, the orders of the subsystems given by the integers $\mu_j$, $j = 1, \dots, p$, are sorted in descending order, and their sum equals the system order $n$, i.e.,
\begin{align}\label{eq:mu_ordered}
    \mu_1 \geq \mu_2 \geq \dots \geq \mu_p > 0\comma \qquad \sum\limits_{j=1}^{p} \mu_j = n.
\end{align}
\end{theorem}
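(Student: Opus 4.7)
The plan is to prove Theorem \ref{theorem:existence} constructively. I would design an iterative procedure that selects rows for $\vect{T}^{-1}$ from the family $\{\vect{c}_j\T\vect{A}^k : j = 1, \dots, p;\ k \geq 0\}$, combined with a suitable output transformation $\vect{\Gamma}$, and then verify by direct computation that the transformed matrices $\vect{\bar{A}} = \vect{T}^{-1}\vect{A}\vect{T}$, $\vect{\bar{D}} = \vect{T}^{-1}\vect{D}$ and $\vect{\bar{C}} = \vect{\Gamma}\vect{C}\vect{T}$ take the block form of \eqref{eq:transformed_dynamic_matrix}--\eqref{eq:transformed_output_matrix}. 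The construction is essentially a variant of the classical observability-index procedure of \citet{Luenberger1967}, but modified so that the unknown-input matrix $\vect{D}$ dictates the chain lengths rather than linear dependence alone.

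First I would generate, for each output $j$, the chain of row vectors $\vect{c}_j\T, \vect{c}_j\T\vect{A}, \vect{c}_j\T\vect{A}^2, \dots$, truncating at the first power $\mu_j$ for which the Markov-type product $\vect{c}_j\T\vect{A}^{\mu_j - 1}\vect{D}$ is no longer annihilated or already absorbed by the previously accumulated chains. The output transformation $\vect{\Gamma}$ is used simultaneously to reorder the outputs so that $\mu_1 \geq \mu_2 \geq \dots \geq \mu_p$ and to remove via elementary row operations on $\vect{C}$ the components of the Markov parameters that would otherwise lengthen subsequent chains. Stacking the $n$ resulting row vectors gives $\vect{T}^{-1}$, and the shape of $\vect{\bar{D}}$ in \eqref{eq:transformed_disturbance_matrix} is then immediate, because by design only the bottom row $\vect{c}_j\T\vect{A}^{\mu_j - 1}$ of each chain contributes a non-trivial product with $\vect{D}$.

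Next I would verify the structures of $\vect{\bar{A}}$ and $\vect{\bar{C}}$. The output matrix $\vect{\bar{C}}$ inherits the form \eqref{eq:transformed_output_matrix} because, after applying $\vect{\Gamma}$, the first row of each chain realizes the corresponding component of $\vect{\bar{y}}$. The superdiagonal ones in each diagonal block of $\vect{\bar{A}}$ arise from the identity $(\vect{c}_j\T\vect{A}^{k-1})\vect{A} = \vect{c}_j\T\vect{A}^{k}$: right-multiplication by $\vect{A}$ simply shifts by one position within the chain. For the terminal row $\vect{c}_j\T\vect{A}^{\mu_j}$ of each chain, re-expanding it in the full basis spanned by the rows of $\vect{T}^{-1}$ produces the coefficients $\alpha_{j,k}$ populating the first column of each block column $j$, together with the $\beta_{i,j,k}$ in the subdiagonal blocks, which reflect how the end of one chain injects into later ones.

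The hard part is proving that the procedure terminates with $\sum_{j=1}^{p} \mu_j = n$, i.e.\ that the collected rows actually span $\mathbb{R}^n$. Classical observability-index constructions require only observability; here, strong observability---characterized by the Rosenbrock rank condition \eqref{eq:strong_observability_condition_rosenbrock}---is what one must invoke to guarantee termination in spite of the unknown inputs. I would translate \eqref{eq:strong_observability_condition_rosenbrock} into a Hautus-style statement on the accumulated rows, ruling out an invariant subspace that is simultaneously unobserved and $\vect{D}$-reachable. A second, more technical point is to confirm that the descending order $\mu_1 \geq \dots \geq \mu_p$ can always be enforced by an appropriate sequence of output transformations without destroying the $\vect{D}$-annihilation pattern inside each chain. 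Once these two points are settled, the remainder of the argument is bookkeeping on the change of basis, and the detailed algorithm is precisely the content of Section \ref{sec:transformation_into_canonical_form}.
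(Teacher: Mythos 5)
Your outline captures several real ingredients of the paper's construction---building chains $\vect{c}_j\T\vect{A}^k$ whose lengths are dictated by when the Markov products $\vect{c}_j\T\vect{A}^k\vect{D}$ cease to be annihilated or absorbable, using an output transformation to reorder and to cancel dependent Markov parameters, and invoking strong observability to guarantee that the chains span $\mathbb{R}^n$ (the paper delegates exactly this to the SCB decomposition of Chen et al.\ in Lemma \ref{LEM:TRANSFORMATION_LEMMA}a)). However, there is a genuine gap at the step you dismiss as bookkeeping: you take the rows of $\vect{T}^{-1}$ to be the raw chains and assert that re-expanding the terminal row $\vect{c}_j\T\vect{A}^{\mu_j}$ in that basis ``produces the coefficients $\alpha_{j,k}$ together with the $\beta_{i,j,k}$ in the subdiagonal blocks.'' Generically that re-expansion has nonzero coefficients in \emph{every} position, whereas the normal form \eqref{eq:transformed_dynamic_matrix} demands that the terminal row of block $j$ vanish on columns $2,\dots,\mu_k$ of every block $k\geq j$ and carry couplings $\beta_{i,j,l}$ only into the longer chains $i<j$, and moreover only for $l\geq\mu_j$ (cf.\ \eqref{eq:beta_j_0}). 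This one-directional, sparse coupling pattern is precisely what makes the cascaded RED observer of Theorem \ref{THEOREM:OBSERVER} converge, and it does not come for free from the chain basis.

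The paper obtains it by \emph{not} setting $\vect{T}^{-1}=\obsvred$: instead the columns of $\vect{T}$ are built backwards from $\vect{t}_{\mu_1+\dots+\mu_j}=\obsvred^{-1}\vect{e}_{\mu_1+\dots+\mu_j}$ with correction terms $-\sum_{r>j}\sum_{q}\beta_{j,r,\mu_j-q}\vect{\check{A}}^{i-q}\vect{t}_{\mu_1+\dots+\mu_r}$ as in \eqref{eq:t_i}, where the $\beta_{j,r,l}$ are chosen to annihilate exactly the unwanted entries. Showing that such coefficients exist and are unique is the content of Lemma \ref{LEM:TRANSFORMATION_LEMMA}c): the linear system \eqref{eq:beta_equation_sys} has the Toeplitz-block coefficient matrix $\Abeta^{(j)}$ of \eqref{eq:H_beta_j_k_l}, whose nonsingularity is proved by an iterated Schur-complement argument exploiting its unitriangular diagonal blocks. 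One must then also re-verify that these corrections do not destroy the forms of $\vect{\bar{D}}$ and $\vect{\bar{C}}$ (Lemmas \ref{LEM:TRANSFORMATION_LEMMA}e) and f)), and that $\vect{T}$ remains nonsingular for \emph{any} values of the $\beta_{j,k,l}$ (Lemma \ref{LEM:TRANSFORMATION_LEMMA}b)). A second, smaller omission: your chains are powers of the original $\vect{A}$, but after absorbing a dependent Markov parameter the chain must continue with powers of the output-injected matrix $\vect{\check{A}}=\vect{A}+\vect{\Xi}\vect{C}$ (equivalently, the shifted whole-chain updates \eqref{eq:Z_j_update}); elementary row operations on $\vect{C}$ alone do not realize this. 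Without these elements the proposal establishes a Luenberger-type staircase form but not the normal form claimed in the theorem.
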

The proof of Theorem \ref{theorem:existence} and an algorithm which allows for determining suitable transformations are given in Section \ref{sec:transformation_into_canonical_form}.

\begin{remark}
	Note that the proposed observer normal form consists of $p$ coupled single-output systems of orders $\mu_1,\dots,\mu_p$. In the case $p = 1$ the proposed observer normal form \eqref{eq:transformed_system} coincides with the well-known transposed observable canonical form for linear single-output systems.
\end{remark}

\begin{remark}
	The proposed observer normal form exhibits a block lower triangular structure up to an output injection. In contrast Luenberger's observability normal form based on so-called observability indices, the proposed observer normal form also takes the impact of the unknown input $\vect{\pert}$ into account.
\end{remark}

\subsection{Robust observer design in the proposed observer normal form}\label{ssec:robust_observer_design}

In order to robustly reconstruct the state vector $\vect{\bar{x}}$ of the system \eqref{eq:transformed_system}, an observer relying on the RED \citep{Levant1998} is designed. The proposed observer is given by
\begin{subequations}\label{eq:overall_observer}
	\begin{align}\label{eq:robust_observer_of_transformed_system}
	\begin{split}
    \overset{\vect{.}}{\vect{\hat{\bar{x}}}} &= \vect{\bar{A}}\vect{\hat{\bar{x}}} + \vect{\bar{\Pi}} \vect{\sigma}_{\vect{\bar{y}}} + \vect{\bar{l}}(\vect{\sigma}_{\vect{\bar{y}}})  \text{,}\\
	\vect{\hat{\bar{y}}} &= \vect{\bar{C}}\vect{\hat{\bar{x}}} =
	\begin{bmatrix}
	\hat{\bar{x}}_1 & \hat{\bar{x}}_{\mu_1 + 1} & \dots & \hat{\bar{x}}_{\mu_1 + \dots + \mu_{p-1} +1}
	\end{bmatrix}\T
	\comma
	\end{split}
	\end{align}
	where
	\begin{align}\label{eq:output_error}
	\vect{\sigma}_{\vect{\bar{y}}}  = \vect{\bar{y}} - \vect{\hat{\bar{y}}} =
	\begin{bmatrix}
	\sigma_{1} & \sigma_{\mu_1 + 1} & \dots & \sigma_{\mu_1 + \dots + \mu_{p-1} + 1}
	\end{bmatrix}\T
	\end{align}
	is the output error,
	\begin{align}\label{eq:alpha_j_1}
	\vect{\bar{\Pi}} =
	\begin{bmatrix}
	\alpha_{1,1} & \dots & \alpha_{p,1} \\
	\vdots & & \vdots \\
	\alpha_{1,n} & \dots & \alpha_{p,n}
	\end{bmatrix}
	\end{align}
\end{subequations}
provides for a linear output injection in order to compensate for the couplings between the single-output systems and $\vect{\bar{l}}(\vect{\sigma}_{\vect{\bar{y}}})$ is the nonlinear output injection. %
\begin{theorem}\label{THEOREM:OBSERVER}
	Consider the observer \eqref{eq:overall_observer} for the estimation of the state vector $\vect{\bar{x}}$ of system \eqref{eq:transformed_system}, the choice
	\begin{align}\label{eq:nonlinear_output_injection}
	\begin{split}
	\resizebox{.84\width}{!}{$
		\vect{\bar{l}}(\vect{\sigma}_{\vect{\bar{y}}}) =
		\left [ \begin{array}{cccc|c|cccc}
		\kappa_{1,\mu_1-1}\krax{\sigma_1}{\frac{\mu_{1}-1}{\mu_1}} & \cdots & \kappa_{1,1}\krax{\sigma_1}{\frac{1}{\mu_1}} & \kappa_{1,0}\krax{\sigma_1}{0} & \cdots & \kappa_{p,\mu_p-1}\krax{\sigma_{\mu_1+ \dots + \mu_{p-1} +1}}{\frac{\mu_{p}-1}{\mu_p}} & \cdots & \kappa_{p,0}\krax{\sigma_{\mu_1+ \dots + \mu_{p-1} +1}}{0}
		\end{array} \right ]\T
		$}
	\end{split}
	\end{align}
	for the nonlinear output injection and the unknown input $\vect{\pert}$ with bounds given in \eqref{eq:perturbation_bounds}. Then, there exist parameters $\kappa_{j,k}$, $j = 1, \dots, p$, $k = 0, \dots, \mu_j -1$, such that the estimation error $\vect{\sigma} = \vect{\bar{x}} - \vect{\hat{\bar{x}}}$ converges to zero within finite time despite the unknown inputs for any initial states. Moreover, convergence of the observer for all admissible unknown input signals is achieved only if $\kappa_{j,0} > \sum\limits_{i = 1}^{m} L_i \abs{\bar{d}_{\mu_1 + \dots + \mu_{j},i}}$.
\end{theorem}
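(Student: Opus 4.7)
The plan is to form the estimation-error dynamics for $\vect{\sigma}=\vect{\bar{x}}-\vect{\hat{\bar{x}}}$, use the block lower triangular structure of $\vect{\bar{A}}-\vect{\bar{\Pi}}\vect{\bar{C}}$ to decouple them into $p$ perturbed integrator chains in the Levant robust exact differentiator (RED) form, and then apply the finite-time convergence result for the RED \citep{Levant2003} in a cascade from $j=1$ to $j=p$. The necessary lower bound on $\kappa_{j,0}$ will be obtained by the standard Filippov equivalent-control argument.

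Subtracting \eqref{eq:robust_observer_of_transformed_system} from \eqref{eq:transformed_system} and using $\vect{\sigma}_{\vect{\bar{y}}}=\vect{\bar{C}}\vect{\sigma}$ yields
\begin{equation*}
\dot{\vect{\sigma}} = (\vect{\bar{A}}-\vect{\bar{\Pi}}\vect{\bar{C}})\vect{\sigma} - \vect{\bar{l}}(\vect{\bar{C}}\vect{\sigma}) + \vect{\bar{D}}\vect{\pert}.
\end{equation*}
A comparison of \eqref{eq:transformed_dynamic_matrix}, \eqref{eq:transformed_output_matrix} and \eqref{eq:alpha_j_1} shows that $\vect{\bar{\Pi}}\vect{\bar{C}}$ cancels exactly the $\alpha$-entries that occupy the first column of every block column of $\vect{\bar{A}}$. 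Hence $\vect{\bar{A}}-\vect{\bar{\Pi}}\vect{\bar{C}}$ is block lower triangular, its $j$-th diagonal block being the $\mu_j\times\mu_j$ upper shift and its only sub-diagonal non-zero entries being the $\beta$-couplings in the last row of each sub-diagonal block. Setting $s_j=\mu_1+\dots+\mu_{j-1}$, the $j$-th block row therefore reads, for $k=1,\ldots,\mu_j-1$,
\begin{equation*}
\dot{\sigma}_{s_j+k} = \sigma_{s_j+k+1} - \kappa_{j,\mu_j-k}\,\krax{\sigma_{s_j+1}}{(\mu_j-k)/\mu_j},
\end{equation*}
together with
\begin{equation*}
\dot{\sigma}_{s_j+\mu_j} = -\kappa_{j,0}\,\sign{\sigma_{s_j+1}} + \eta_j(t), \quad \eta_j(t) := \sum_{i=1}^{m}\bar{d}_{s_j+\mu_j,i}\,\pert_i + \sum_{i=1}^{j-1}\sum_{l=1}^{\mu_i-1}\beta_{i,j,l}\,\sigma_{s_i+l+1}.
\end{equation*}
This is precisely an arbitrary-order RED of order $\mu_j-1$ driven by the scalar perturbation $\eta_j$ on its last integrator.

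Sufficiency then follows by induction on $j$. For $j=1$ the $\beta$-sum is empty, so $|\eta_1(t)|\le\Lambda_1:=\sum_{i=1}^{m}L_i\abs{\bar{d}_{\mu_1,i}}$; picking $\kappa_{1,0}>\Lambda_1$ and the remaining $\kappa_{1,k}$ as any admissible Levant gain set yields finite-time convergence to $\sigma_1=\dots=\sigma_{\mu_1}\equiv 0$ after some time $T_1$. Inductively, assume subsystems $1,\ldots,j-1$ are identically zero for $t\ge T_{j-1}$. Then for $t\ge T_{j-1}$ the $\beta$-sum in $\eta_j$ vanishes and $|\eta_j(t)|\le\Lambda_j:=\sum_{i=1}^{m}L_i\abs{\bar{d}_{s_j+\mu_j,i}}$. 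On $[0,T_{j-1}]$ the subsystem-$j$ state stays finite, since its Filippov right-hand side grows only sub-linearly in $\vect{\sigma}$ and the upstream errors are bounded on any compact interval, ruling out finite escape. Restarting the RED at $t=T_{j-1}$ with gains tuned for $\Lambda_j$ (in particular with $\kappa_{j,0}>\Lambda_j$) delivers finite-time convergence of subsystem $j$; after $T_p$ the full error is zero.

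For the necessity of $\kappa_{j,0}>\Lambda_j$ I would choose the admissible constant input $\pert_i=L_i\sign{\bar{d}_{s_j+\mu_j,i}}$, which, once the lower subsystems have converged, drives $\eta_j\equiv\Lambda_j$. Sustaining the candidate manifold $\sigma_{s_j+1}=\dots=\sigma_{s_j+\mu_j}=0$ requires, in the Filippov equivalent-control sense, that $\kappa_{j,0}\sign{\sigma_{s_j+1}}$ balance $\eta_j$, so $\kappa_{j,0}\ge\Lambda_j$; a violation of this inequality makes the manifold non-attractive for this admissible input and convergence for \emph{all} admissible inputs cannot hold. The main obstacle I anticipate is the non-escape claim on $[0,T_{j-1}]$: while the cascade idea is classical, one must combine the sub-linear growth of the RED right-hand side with an \emph{a priori} bound on the upstream $\beta$-couplings on compact time intervals in order to legitimately invoke Levant's any-initial-condition convergence result starting from the restart time $T_{j-1}$.
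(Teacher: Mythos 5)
Your proposal follows essentially the same route as the paper's proof: derive the error dynamics, observe that the linear injection $\vect{\bar{\Pi}}\vect{\sigma}_{\vect{\bar{y}}}$ cancels the $\alpha$-couplings so that the error system splits into $p$ coupled RED-type chains with the $\beta$-couplings entering only the last equation of each block, then argue finite-time convergence by induction over the cascade (boundedness of downstream subsystems during the transient, vanishing of the couplings after upstream convergence), and obtain the gain condition $\kappa_{j,0}>\sum_i L_i\abs{\bar{d}_{\mu_1+\dots+\mu_j,i}}$ from the requirement that the discontinuity dominate the residual perturbation. Your explicit worst-case constant input for the necessity part and your remarks on ruling out finite escape are slightly more detailed than the paper's, but the argument is the same.
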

\begin{proof}[Proof of Theorem \ref{THEOREM:OBSERVER}]
	The estimation error
	\begin{align}\label{eq:estimation_error}
	\vect{\sigma} = \vect{\bar{x}} - \vect{\hat{\bar{x}}} =
	\begin{bmatrix}
	\sigma_{1} & \sigma_{2} & \dots & \sigma_{n}
	\end{bmatrix}\T
	\end{align}
	yields the estimation error dynamics
	\begin{align}\label{eq:estimation_error_dynamics}
	\begin{split}
	&\dot{\vect{\sigma}} = \vect{\bar{A}}\vect{\sigma} - \vect{\bar{\Pi}} \vect{\sigma}_{\vect{\bar{y}}} - \vect{\bar{l}}(\vect{\sigma}_{\vect{\bar{y}}}) +  \vect{\bar{D}}\vect{\pert} \text{,}\\
	&\vect{\sigma}_{\vect{\bar{y}}} = \vect{\bar{C}}\vect{\sigma}.
	\end{split}
	\end{align}
	Taking into account the structure of the involved matrices allows to rewrite the estimation error dynamics \eqref{eq:estimation_error_dynamics} as $p$ coupled subsystems $\Sigma_j$ of order $\mu_j$, i.e.,
	\begin{align}\label{eq:estimation_error_dynamics_element_wise}
	\begin{split}
	\Sigma_1 &
	\begin{cases}
	\dot{\sigma}_1 = \sigma_2 -\kappa_{1,\mu_1-1} \krax{\sigma_{1}}{\frac{\mu_1 - 1}{\mu_1}} \\
	\vdots \\
	\dot{\sigma}_{\mu_1 - 1} = \sigma_{\mu_1} -\kappa_{1,1} \krax{\sigma_{1}}{\frac{1}{\mu_1}} \\
	\dot{\sigma}_{\mu_1} = -\kappa_{1,0} \krax{\sigma_{1}}{0} +\sum\limits_{i = 1}^{m} \bar{d}_{\mu_1,i} \pert_i \\
	\end{cases}\\
	\Sigma_2 &
	\begin{cases}
	\dot{\sigma}_{\mu_1 +1} = \sigma_{\mu_1 + 2} -\kappa_{2,\mu_2-1} \krax{\sigma_{\mu_1 +1}}{\frac{\mu_2 - 1}{\mu_2}} \\
	\vdots \\
	\dot{\sigma}_{\mu_1 + \mu_2 - 1} = \sigma_{\mu_1 + \mu_2} -\kappa_{2,1} \krax{\sigma_{\mu_1 +1}}{\frac{1}{\mu_2}} \\
	\dot{\sigma}_{\mu_1 + \mu_2} = \sum\limits_{l = 1}^{\mu_1 -1} \beta_{1,2,l} \sigma_{l+1} -\kappa_{2,0} \krax{\sigma_{\mu_1 +1}}{0} +\sum\limits_{i = 1}^{m} \bar{d}_{\mu_1 + \mu_2,i} \pert_i \\
	\end{cases} \\
	& \hspace*{0.5cm}\vdots \\
	\Sigma_p &
	\begin{cases}
	\dot{\sigma}_{\mu_1 + \dots + \mu_{p-1} +1} = \sigma_{\mu_1 + \dots + \mu_{p-1} +2} -\kappa_{p,\mu_p-1} \krax{\sigma_{\mu_1 + \dots + \mu_{p-1} +1}}{\frac{\mu_p - 1}{\mu_p}} \\
	\vdots \\
	\dot{\sigma}_{n - 1} = \sigma_{n} -\kappa_{p,1} \krax{\sigma_{\mu_1 + \dots + \mu_{p-1} +1}}{\frac{1}{\mu_p}} \\
	\dot{\sigma}_{n} = \sum\limits_{i = 1}^{p-1} \sum\limits_{l = 1}^{\mu_i -1} \beta_{i,p,l} \sigma_{\mu_1 + \dots + \mu_{i -1} + l+1} - \kappa_{p,0} \krax{\sigma_{\mu_1 + \dots + \mu_{p-1} +1}}{0} +\sum\limits_{i = 1}^{m} \bar{d}_{n,i} \pert_i\text{,}
	\end{cases}
	\end{split}
	\end{align}
	where each subsystem $\Sigma_j$, in terms of structure coincides with the estimation error dynamics of the RED \citep{Levant1998, Levant2003a} with additional couplings $\sum\limits_{i = 1}^{j-1} \sum\limits_{l = 1}^{\mu_i -1} \beta_{i,p,l} \sigma_{\mu_1 + \dots + \mu_{i -1} + l+1}$ in the last differential equation. It is known that there exist parameters {$\kappa_{j,k}$, $k = 0, \dots, \mu_{j -1}$} such that the state variables $\sigma_{\mu_1 + \dots + \mu_{j -1} + 1},\dots,\sigma_{\mu_1 + \dots + \mu_{j}}$ of the respective subsystem $\Sigma_j$ converge to zero in finite time if the right-hand side of the last differential equation is bounded \citep{CRUZZAVALA2016660}. Therefore, since the unknown inputs $\pert_i$ are bounded, the estimation error variables $\sigma_{1}, \dots, \sigma_{\mu_1}$ of subsystem $\Sigma_1$ converge to zero within finite time for properly chosen parameters $\kappa_{1,k}$. During the convergence of the state variables of $\Sigma_1$, the state variables of the other subsystems remain bounded as the systems do not exhibit a finite escape time. After finite-time convergence of the state variables of $\Sigma_1$, the coupling of $\Sigma_1$ and $\Sigma_2$ through the last differential equation of $\Sigma_2$ vanishes and the estimation error variables $\sigma_{\mu_1 + 1},\dots,\sigma_{\mu_1 + \mu_2}$ of system $\Sigma_2$ converge to zero within finite time for properly chosen parameters $\kappa_{2,k}$. This further decouples $\Sigma_3$ from $\Sigma_2$. By induction, it can be shown, that this step-wise finite-time convergence is achieved for all further subsystems $\Sigma_3,\dots,\Sigma_p$. Thus, it can be concluded that the estimation error $\vect{\sigma}$ vanishes in finite time despite the unknown inputs for properly chosen parameters $\kappa_{j,k}$, $j = 1, \dots, p$, $k = 0, \dots, \mu_j -1$.
	
	Furthermore, the robustness of \eqref{eq:estimation_error_dynamics_element_wise} requires that the discontinuity in the last differential equation of each subsystem $\Sigma_j$ is capable of dominating the respective right-hand side \citep{Levant1998}. Since the linear coupling terms in $\Sigma_j$ vanish when all the previous subsystems $\Sigma_1,\dots,\Sigma_{j-1}$ have converged, the condition
	\begin{align}\label{eq:condition_on_kappa_0}
	\kappa_{j,0} > \sum\limits_{i = 1}^{m} L_i \abs{\bar{d}_{\mu_1 + \dots + \mu_{j},i}} \qquad \text{for } j = 1, \dots, p
	\end{align}
	is required.
\end{proof}

\begin{remark}
	The observer parameters $\kappa_{j,k}$ of the subsystem $\Sigma_j$ correspond to the gains of a $(\mu_j -1)^\text{th}$ order RED. It is pointed out that the choice of the observer parameters depends on the bounds of the unknown inputs only and, thus, is independent of the couplings. Since the subsystems are decoupled of each other after some finite time, the parameters of each subsystem can be selected independently. Well-established parameter settings can be found e.g. in \citep[Section 6.7]{Shtessel2013} and \citep{reichhartinger2018arbitrary}.
\end{remark}

\begin{remark}
	In the case $\bar{d}_{\mu_{1} + \dots + \mu_{j},1} = \bar{d}_{\mu_{1} + \dots + \mu_{j},2} =\dots = \bar{d}_{\mu_{1} + \dots + \mu_{j},m} = 0$ for some $j$, i.e., there acts no unknown input on $\Sigma_j$, the output injection $\vect{\bar{l}}(\vect{\sigma}_{\vect{\bar{y}}})$ can be modified such that the corresponding subsystem $\Sigma_j$ coincides with the estimation error dynamics of any other non-robust finite-time differentiator, e.g. the continuous, homogeneous differentiators considered in \citep{Perruquetti4459812, CRUZZAVALA2016660}.
\end{remark}

\begin{remark}
	In the case of a system with a scalar output and a single scalar input, i.e. $p = m = 1$, the proposed observer \eqref{eq:overall_observer} with the output injection \eqref{eq:nonlinear_output_injection} coincides with the observer proposed in \citep{Niederwieser2019} for homogeneity degree $q = -1$.
\end{remark}

\begin{remark}
	If the output is corrupted by additive, uniformly bounded measurement noise ${\vect{v} = \begin{bmatrix} v_1 & \cdots & v_p\end{bmatrix}\T}$, i.e., ${\vect{\bar{y}} = \vect{\bar{C}}\vect{\bar{x}} + \vect{v}}$, $\abs{v_j} \leq \eta_j$, $\eta_j \geq 0$, $j = 1, \dots, p$, the estimation error $\vect{\sigma}$ stays bounded with bounds discussed e.g. in \citep{CRUZZAVALA2016660}.
\end{remark}

\section{Transformation into observer normal form}\label{sec:transformation_into_canonical_form}

The transformation of system \eqref{eq:original_system} into the proposed observer normal form \eqref{eq:transformed_system} is achieved by a regular state transformation of the form
\begin{align}\label{eq:state_transformation}
\vect{\bar{x}} = \vect{T}^{-1} \vect{x} \quad \text{with} \quad \vect{T} \in \mathbb{R}^{n \times n}\comma
\end{align}
and a regular output transformation
\begin{align}\label{eq:output_transformation}
\vect{\bar{y}} = \vect{\Gamma} \vect{y}\quad \text{with} \quad \vect{\Gamma} \in \mathbb{R}^{p \times p}\comma
\end{align}
which yield the matrices
\begin{subequations}\label{eq:matrices_of_transformed_system}
	\begin{align}
	\vect{\bar{A}} &= \vect{T}^{-1}\vect{A}\vect{T}\comma \label{eq:matrices_of_transformed_system_A}\\
	\vect{\bar{D}} &= \vect{T}^{-1}\vect{D}\comma \label{eq:matrices_of_transformed_system_D}\\
	\vect{\bar{C}} &= \vect{\Gamma}\vect{C}\vect{T}\comma \label{eq:matrices_of_transformed_system_C}
	\end{align}
\end{subequations}
of the transformed system.
An algorithm for the construction of the transformation matrices $\vect{T}$ and $\vect{\Gamma}$ is presented in Section \ref{ssec:description_of_proposed_algortihm}. Moreover, a Matlab implementation of the presented algorithm can be downloaded from \url{http://www.reichhartinger.at/}. The theoretical basis including the relevant proofs of Theorem \ref{theorem:existence} is provided in Section \ref{ssec:lemmas_and_theorems_of_transformation}.

\subsection{Description of the proposed algorithm}\label{ssec:description_of_proposed_algortihm}

In order to construct the state transformation matrix $\vect{T}$ and the output transformation matrix $\vect{\Gamma}$, apply the following four-step algorithm:

\subsubsection*{Step 1: Output transformation and output-feedback based decomposition of the dynamic matrix}

\begin{adjustwidth}{1.5em}{0pt}
The first step aims for maximizing the relative degrees of the outputs w.r.t. $\vect{\pert}$ by means of an output transformation and some output-feedback $\vect{\Xi}$. To be more specific, the goal is to find an output transformation \eqref{eq:output_transformation} and a decomposition of the dynamic matrix into
\begin{align}\label{eq:decompose_dynamic_matrix}
\boldsymbol{A} = \vect{\check{A}} - \vect{\Xi}\vect{C} \quad \text{with} \quad \vect{\Xi}\in\mathbb{R}^{n \times p}\comma
\end{align}
such that the auxiliary system
\begin{align}\label{eq:auxiliary_system_step_1}
\begin{split}
&\overset{\vect{.}}{\vect{\check{x}}} = \vect{\check{A}} \vect{\check{x}} + \vect{D}\vect{\Delta}\comma\\
&\vect{\check{y}} = \begin{bmatrix} \check{y}_1 & \check{y}_2 & \dots & \check{y}_p \end{bmatrix}\T = \vect{\check{C}} \vect{\check{x}}\comma
\end{split}
\end{align}
with
\begin{align}\label{eq:auxiliary_system_output_step_1}
\vect{\check{C}} =
\begin{bmatrix}
\vect{\check{c}}_1 &
\vect{\check{c}}_2 &
\dots &
\vect{\check{c}}_p
\end{bmatrix}\T
= \vect{\Gamma}\vect{C}
\end{align}
has the following properties:\\[-10pt]
\begin{enumerate}[label=(\roman*)]
	\item There exist integers ${\mu_1 \geq \mu_{2} \geq \dots \geq \mu_{p} > 0}$ satisfying $\sum\limits_{j=1}^p \mu_j = n$
	such that the $n \times n$ matrix
	\begin{align}\label{eq:reduced_observability_matrix}
	\obsvred =
	\begin{bmatrix}
	\vect{\check{c}}_1\T \\
	\vdots \\
	\vect{\check{c}}_1\T \vect{\check{A}}^{\mu_1 - 1} \\
	\vect{\check{c}}_2\T \\
	\vdots \\
	\vect{\check{c}}_2\T \vect{\check{A}}^{\mu_2 - 1} \\
	\vdots \\
	\vect{\check{c}}_p\T \\
	\vdots \\
	\vect{\check{c}}_p\T \vect{\check{A}}^{\mu_p - 1} \\
	\end{bmatrix}
	\end{align}
	is invertible, i.e.,
	\begin{align}\label{eq:rank_of_reduced_observability_matrix}
	\rank \obsvred = n.
	\end{align}
	\item If the relative degree $\check{\delta}_j$ of the output $\check{y}_j$ w.r.t. the unknown input $\vect{\pert}$ exists, it satisfies ${\check{\delta}_j \geq \mu_j}$, i.e.,
	\begin{align}\label{eq:relative_degree_condition}
	\vect{\check{c}}_j\T \vect{\check{A}}^{i} \vect{D} = \vect{0}\T \qquad \text{for } j = 1,\dots,p\text{, } i = 0,\dots,\mu_{j}-2.
	\end{align}
\end{enumerate}
The following iterative procedure is partially taken from the steps SCB.1, SCB.2 and SCB.3 of the decomposition algorithm proposed in \citep[Section 5.3, pages 119-127]{Chen2004}.

First of all, initialize
\begin{align}\label{eq:init_step_1}
\begin{split}
&\vect{Z} = \vect{C} \comma  \quad
\vect{Z}_j = \vect{c}_j\T \comma  \quad
\vect{\Psi}_j = \vect{e}_{j}\T \in \mathbb{R}^{1 \times p} \comma  \quad
\vect{W} =
\begin{bmatrix}
\vect{w}_1\T \\
\vect{w}_2\T \\
\vdots \\
\vect{w}_p\T
\end{bmatrix}
= 
\begin{bmatrix}
\vect{0}_{1 \times m} \\
\vect{0}_{1 \times m} \\
\vdots \\
\vect{0}_{1 \times m}
\end{bmatrix} \comma \quad
\nu_j = 1 \comma
\end{split}
\end{align}
for all $j = 1,\dots,p$ and the flag vector
\begin{align}\label{eq:init_flag_vector}
\vect{f} =
\begin{bmatrix}
f_1 \\ f_2 \\ \vdots \\ f_p
\end{bmatrix} = 
\begin{bmatrix}
1 \\ 1 \\ \vdots \\ 1
\end{bmatrix}.
\end{align}
Note that these quantities are modified from iteration to iteration and $\vect{Z}$, $\vect{Z}_j$ and $\vect{\Psi}_j$ will be augmented by further rows. Repeat until all elements of $\vect{f}$ have become zero:

For each non-zero element $f_j$ consider the last row $\vect{z}_{j,\nu_j}\T$ of the corresponding matrix
\begin{align}\label{eq:Z_j}
\vect{Z}_j = \begin{bmatrix} \vect{z}_{j,1}\T \\ \vdots \\ \vect{z}_{j,\nu_j}\T \end{bmatrix}.
\end{align}\\
\textbf{Case 1:} If
\begin{align}\label{eq:rank_W_increases}
\rank \begin{bmatrix}
\vect{W}\\
\vect{z}_{j,\nu_j}\T \vect{D}
\end{bmatrix}
> \rank \vect{W}
\end{align}
set $f_j \leftarrow 0$ and $\vect{w}_j\T \leftarrow \vect{z}_{j,\nu_j}\T \vect{D}$. Then, continue with the next non-zero $f_j$.\\
\textbf{Case 2:} If
\begin{align}\label{eq:rank_W_stays}
\rank \begin{bmatrix}
\vect{W}\\
\vect{z}_{j,\nu_j}\T \vect{D}
\end{bmatrix}
= \rank \vect{W}
\end{align}
calculate coefficients $\zeta_{j,\nu_j,k}$, $k = 1,\dots,p$ which allow to represent $\vect{z}_{j,\nu_j}\T \vect{D}$ as a linear combination of the rows $\vect{w}_j\T$ of $\vect{W}$, i.e.,
\begin{align}\label{eq:linear_combinations_of_W}
\vect{z}_{j,\nu_j}\T \vect{D} =
\begin{bmatrix}
\zeta_{j,\nu_j,1} & \dots & \zeta_{j,\nu_j,p}
\end{bmatrix}
\vect{W}.
\end{align}
Note that \eqref{eq:linear_combinations_of_W} may have infinitely many solutions. A reasonable choice is given by the solution with the minimum quadratic norm.
Update the rows of $\vect{Z}_j$ according to
\begin{align}\label{eq:Z_j_update}
\begin{split}
&\vect{z}_{j,1}\T \leftarrow \vect{z}_{j,1}\T - \sum\limits_{k = 1}^{p} \zeta_{j,\nu_j,k} \vect{z}_{k,\nu_k - \nu_j + 1}\T \\
&\vect{z}_{j,2}\T \leftarrow \vect{z}_{j,2}\T - \sum\limits_{k = 1}^{p} \zeta_{j,\nu_j,k} \vect{z}_{k,\nu_k - \nu_j + 2}\T \\
&\vdots \\
&\vect{z}_{j,\nu_j}\T \leftarrow \vect{z}_{j,\nu_j}\T - \sum\limits_{k = 1}^{p} \zeta_{j,\nu_j,k} \vect{z}_{k,\nu_k}\T\comma
\end{split}
\end{align}
where $\vect{z}_{k,l} = \vect{0}\T$ if $l < 1$. Thus, $\vect{Z}_{j}\T \vect{D} = \vect{0}$ is satisfied.
Furthermore, update
\begin{align}\label{eq:Psi_j_update}
&\vect{\Psi}_{j} \leftarrow \vect{\Psi}_{j} - \sum\limits_{k = 1}^{p} \zeta_{j,\nu_j,k}
\begin{bmatrix}
\vect{0}_{(\nu_j - \nu_k) \times p}\\
\vect{\Psi}_{k}
\end{bmatrix}\comma \quad
\vect{Z} \leftarrow
\begin{bmatrix}
    \vect{Z}_1 \\
    \vdots \\
    \vect{Z}_p
\end{bmatrix}.
\end{align}\\
\begin{adjustwidth}{1.5em}{0pt}
\textbf{Sub-case 2.1:} If
\begin{align}\label{eq:Z_no_rank_increase}
& \rank
\begin{bmatrix}
\vect{Z} \\
\vect{z}_{j,\nu_j}\T\vect{A}
\end{bmatrix}
= \rank \vect{Z}
\end{align}
set the corresponding flag $f_j \leftarrow 0$.\\
\textbf{Sub-case 2.2:} If
\begin{align}\label{eq:Z_rank_increase}
& \rank
\begin{bmatrix}
\vect{Z} \\
\vect{z}_{j,\nu_j}\T\vect{A}
\end{bmatrix}
> \rank \vect{Z}
\end{align}
augment the matrices
\begin{align}\label{eq:augment_matrices}
\vect{Z}_j \leftarrow
\begin{bmatrix}
\vect{Z}_j \\
\vect{z}_{j,\nu_j}\T\vect{A}
\end{bmatrix}\comma \quad
\vect{\Psi}_j \leftarrow
\begin{bmatrix}
\vect{\Psi}_j \\
\vect{0}_{1 \times p}
\end{bmatrix}\comma
\end{align}
update
\begin{align}\label{eq:update_Z}
& \vect{Z} \leftarrow
\begin{bmatrix}
	\vect{Z}_1 \\
	\vdots \\
	\vect{Z}_p
\end{bmatrix}\comma
\end{align}
and increase
\begin{align}\label{eq:increase_nu}
& \nu_j \leftarrow \nu_j + 1.
\end{align}
\end{adjustwidth}
This procedure is repeated with the next non-zero element $f_j$ until $\vect{f} = \vect{0}$. Note that, once all elements in the flag vector $\vect{f}$ have become zero, $\vect{Z}$ is an invertible $n \times n$ matrix and $\sum\limits_{j=1}^{p} \nu_{j} = n$ is ensured, which is shown later on. Assign the integers $1,2,\dots,p$ to $j_1, j_2, \dots, j_p$ such that
\begin{align}\label{eq:sorted_index}
\nu_{j_1} \geq \nu_{j_2} \geq \dots \geq \nu_{j_p}
\end{align}
are sorted in descending order and assign the orders $\mu_j$ of the subsystems as
\begin{align}\label{eq:mu_sort_algorithm}
\mu_1 = \nu_{j_1}\comma \quad \mu_2 = \nu_{j_2}\comma\quad \dots\comma \quad \mu_p = \nu_{j_p}.
\end{align}
Consider the matrices $\vect{\Psi}_{j}$ partitioned into their rows
\begin{align}\label{eq:Psi_j}
\vect{\Psi}_j = \begin{bmatrix} \vect{\psi}_{j,1}\T \\ \vdots \\ \vect{\psi}_{j,\nu_j}\T \end{bmatrix}.
\end{align}
and construct the output transformation matrix 
\begin{align}\label{eq:output_transformation_matrix_construction}
& \vect{\Gamma} =
\begin{bmatrix}
\vect{\psi}_{j_1,1}\T \\
\vect{\psi}_{j_2,1}\T \\
\vdots \\
\vect{\psi}_{j_p,1}\T \\
\end{bmatrix}.
\end{align}
Finally, construct the matrix $\vect{\Xi}$ introduced in \eqref{eq:decompose_dynamic_matrix} according to
\begin{align}\label{eq:Xi_construction}
& \vect{\Xi} = \vect{Z}^{-1}
\begin{bmatrix}
\vect{\psi}_{1,2}\T \\
\vdots \\
\vect{\psi}_{1,\nu_1}\T \\
\vect{0}\T \\ \hline
\vdots \\ \hline
\vect{\psi}_{p,2}\T \\
\vdots \\
\vect{\psi}_{p,\nu_p}\T \\
\vect{0}\T \\
\end{bmatrix}\comma
\end{align}
and calculate the matrices $\vect{\check{A}}$ and $\vect{\check{C}}$ of the auxiliary system \eqref{eq:auxiliary_system_step_1} from \eqref{eq:decompose_dynamic_matrix} and \eqref{eq:auxiliary_system_output_step_1}, respectively.
\end{adjustwidth}

\subsubsection*{Step 2: Calculation of $p$ columns of the state transformation matrix}

\begin{adjustwidth}{1.5em}{0pt}
	
	The construction of the state transformation \eqref{eq:state_transformation} is based on the auxiliary system \eqref{eq:auxiliary_system_step_1}. Consider the transformation matrix $\vect{T}$ to be expressed by its column vectors $\vect{t}_i$ i.e.,
	\begin{align}\label{eq:transformation_matrix}
	\vect{T} =
	\left [ \begin{array}{cccc|ccc|c|ccc}
	\vect{t}_1 & \vect{t}_2 & \dots & \vect{t}_{\mu_1} & t_{\mu_1+1} & \dots & \vect{t}_{\mu_1+ \mu_2} & \dots & \vect{t}_{\mu_1 + \dots + \mu_{p-1}+1} & \dots & \vect{t}_{n}
	\end{array} \right ].
	\end{align}
	Calculate the column vectors $\vect{t}_{\mu_1}$, $\vect{t}_{\mu_1 + \mu_2}$, $\dots$, $\vect{t}_{n}$ of the transformation matrix $\vect{T}$ from
	\begin{align}\label{eq:t_mu_j_major}
	\begin{bmatrix}
	\vect{t}_{\mu_1} & \vect{t}_{\mu_1 + \mu_2} & \dots & \vect{t}_{n}
	\end{bmatrix}
	=
	\obsvred^{-1}
	\begin{bmatrix}
	\vect{e}_{\mu_1} & \vect{e}_{\mu_1 + \mu_2} & \dots & \vect{e}_{n}
	\end{bmatrix}\text{,}
	\end{align}
	where $\obsvred$ is the observability matrix like matrix of the auxiliary system given in \eqref{eq:reduced_observability_matrix}.
\end{adjustwidth}

\subsubsection*{Step 3: Calculation of the coefficients $\beta_{j,k,l}$}

\begin{adjustwidth}{1.5em}{0pt}
	For each ${j = 1,2,\dots,p-1}$ calculate the coefficients $b_{j,k,l}$ of the matrix $\vect{\bar{A}}$ for ${k = j+1,j+2, \dots, p}$ and ${l = 1,2 \dots, \mu_j -1}$ in the following way:
	\begin{enumerate}
		\item Construct the invertible matrix
		\begin{align}\label{eq:H_beta_j}
		\Abeta^{(j)} =
		\begin{bmatrix}
		\Abeta^{(j)}_{j+1,j+1} & \cdots & \Abeta^{(j)}_{j+1,p} \\
		\vdots& \ddots & \vdots \\
		\Abeta^{(j)}_{p,j+1} & \cdots & \Abeta^{(j)}_{p,p} \\
		\end{bmatrix}\text{,}
		\end{align}
		where the submatrices $\Abeta^{(j)}_{r,s} \in \mathbb{R}^{(\mu_j-\mu_r) \times (\mu_j-\mu_s)}$ have the Toeplitz structure
		\begin{align}\label{eq:H_beta_j_k_l}
		\resizebox{.84\width}{!}{$
			\Abeta^{(j)}_{r,s} =
			\begin{cases}
			\hspace{0.1cm}
			\begin{bmatrix}
			\vect{\check{c}}_r\T \vect{\check{A}}^{\mu_s-1}\vect{t}_{\mu_1+ \dots +\mu_s}  & \vect{\check{c}}_r\T \vect{\check{A}}^{\mu_s}\vect{t}_{\mu_1+ \dots +\mu_s}   & \cdots & \vect{\check{c}}_r\T \vect{\check{A}}^{\mu_j-2}\vect{t}_{\mu_1+ \dots +\mu_s} \\
			\vect{\check{c}}_r\T \vect{\check{A}}^{\mu_s-2}\vect{t}_{\mu_1+ \dots +\mu_s}  & \vect{\check{c}}_r\T \vect{\check{A}}^{\mu_s-1}\vect{t}_{\mu_1+ \dots +\mu_s} & \ddots & \vect{\check{c}}_r\T \vect{\check{A}}^{\mu_j-3}\vect{t}_{\mu_1+ \dots +\mu_s} \\
			\vdots                                                         & \ddots                                                        & \ddots & \vdots \\
			\vect{\check{c}}_r\T \vect{\check{A}}^{\mu_r} \vect{t}_{\mu_1+ \dots +\mu_s}   & \ddots                                                        & \ddots & \vdots \\
			0                                                              & \ddots                                                        & \ddots & \vdots \\
			\vdots                                                         & \ddots                                                        & \ddots & \vdots \\
			\vdots                                                         &                                                               & \ddots & \vect{\check{c}}_r\T \vect{\check{A}}^{\mu_r} \vect{t}_{\mu_1+ \dots +\mu_s} \\
			0                                                              & \cdots                                                        & \cdots & 0
			\end{bmatrix}
			\qquad \text{if} \quad r > s \quad (\Rightarrow \mu_r \leq \mu_s)\text{,}
			\vspace{0.2cm} \\ \hspace{0.1cm}
			\begin{bmatrix}
			1                                                              & \vect{\check{c}}_r\T \vect{\check{A}}^{\mu_r}\vect{t}_{\mu_1+ \dots +\mu_r}   & \cdots & \vect{\check{c}}_r\T \vect{\check{A}}^{\mu_j-2}\vect{t}_{\mu_1+ \dots +\mu_r} \\
			0                                                              & \ddots                                                        & \ddots & \vdots \\
			\vdots                                                         & \ddots                                                        & \ddots & \vect{\check{c}}_r\T \vect{\check{A}}^{\mu_r}\vect{t}_{\mu_1+ \dots +\mu_r} \\
			0                                                              & \cdots                                                        & 0      & 1
			\end{bmatrix}
			\qquad \text{if} \quad r = s \quad (\Rightarrow \mu_r = \mu_s) \text{,}
			\vspace{0.2cm} \\ \hspace{0.1cm}
			\begin{bmatrix}
			0                                                              & \cdots & 0      & \vect{\check{c}}_r\T \vect{\check{A}}^{\mu_r}\vect{t}_{\mu_1+ \dots +\mu_s}   & \cdots & \vect{\check{c}}_r\T \vect{\check{A}}^{\mu_j-2}\vect{t}_{\mu_1+ \dots +\mu_s} \\
			\vdots                                                         &        & \vdots & \ddots                                                        & \ddots & \vdots \\
			\vdots                                                         &        & \vdots &                                                               & \ddots & \vect{\check{c}}_r\T \vect{\check{A}}^{\mu_r}\vect{t}_{\mu_1+ \dots +\mu_s} \\
			0                                                              & \cdots & 0      & \cdots                                                        & \cdots & 0
			\end{bmatrix}\qquad \text{if} \quad r < s \quad (\Rightarrow \mu_r \geq \mu_s).
			\end{cases}$}
		\end{align}
		\item Build up the vector
		\begin{align}\label{eq:bbeta_j}
		\vect{\bbeta}^{(j)} =
		\begin{bmatrix}
		\vect{\bbeta}^{(j)}_{j+1} \\
		\vect{\bbeta}^{(j)}_{j+2} \\
		\vdots \\
		\vect{\bbeta}^{(j)}_{p}
		\end{bmatrix}\comma \quad \text{where} \quad
    	\vect{\bbeta}^{(j)}_r =
        \begin{bmatrix}
        \vect{\check{c}}_r\T \vect{\check{A}}^{\mu_j-1}\vect{t}_{\mu_1+ \dots +\mu_j}\\
        \vect{\check{c}}_r\T \vect{\check{A}}^{\mu_j-2}\vect{t}_{\mu_1+ \dots +\mu_j}\\
        \vdots \\
        \vect{\check{c}}_r\T \vect{\check{A}}^{\mu_r}\vect{t}_{\mu_1+ \dots +\mu_j}
         \end{bmatrix}.
		\end{align}
		\item Solve the linear system of equations
		\begin{align}\label{eq:beta_equation_sys}
		\Abeta^{(j)}\vect{\beta}^{(j)} = \bbeta^{(j)}
		\end{align}
		for the vector $\vect{\beta}^{(j)}$ which holds the coefficients $\beta_{j,k,l}$ for $l \geq \mu_k$, i.e.,
		\begin{align}\label{eq:beta_j}
		\vect{\beta}^{(j)} =
		\begin{bmatrix}
		\vect{\beta}^{(j)}_{j+1} \\
		\vect{\beta}^{(j)}_{j+2} \\
		\vdots \\
		\vect{\beta}^{(j)}_{p}
		\end{bmatrix}\text{,}
		\qquad
		\text{with}
		\quad
		\vect{\beta}^{(j)}_r =
		\begin{bmatrix}
		\beta_{j,r,\mu_{r}} \\
		\vdots \\
		\beta_{j,r,\mu_{j}-1}
		\end{bmatrix}.
		\end{align}
		Set the remaining coefficients to zero, i.e.,
		\begin{align}\label{eq:beta_j_0}
		\beta_{j,k,l} = 0 \qquad \text{for} \quad l < \mu_k.
		\end{align}		
	\end{enumerate}
\end{adjustwidth}

\subsubsection*{Step 4: Construction of the state transformation matrix}

\begin{adjustwidth}{1.5em}{0pt}
	Calculate the remaining columns of the transformation matrix $\vect{T}$ given in \eqref{eq:transformation_matrix} according to
	\begin{align}\label{eq:t_i}
	\vect{t}_{\mu_1+ \dots + \mu_j - i} = \vect{\check{A}}^{i}\vect{t}_{\mu_1 + \dots + \mu_j} - \sum\limits_{r = j+1}^{p}\sum\limits_{q=1}^{i} \beta_{j,r,\mu_j - q} \vect{\check{A}}^{i-q}\vect{t}_{\mu_1+ \dots +\mu_r}\comma
	\end{align}
	where $j = 1,2,\dots,p$ and $i = 0,1,\dots,\mu_j-1$.
	Note that \eqref{eq:t_i} is also consistent in the case $i = 0$ which is exploited in the proofs in the Appendix. \\[-5pt]
\end{adjustwidth}

\noindent The presented algorithm yields non-singular transformation matrices $\vect{T}$ and $\vect{\Gamma}$ whenever the original system \eqref{eq:original_system} is strongly observable, which is shown in the following Section \ref{ssec:lemmas_and_theorems_of_transformation}.

\subsection{Existence of the proposed state transformation}\label{ssec:lemmas_and_theorems_of_transformation}

In this section, the theoretical basis for the previously presented state transformation is established. It shown that the algorithm proposed in Section \ref{ssec:description_of_proposed_algortihm} yields a description of the system in the observer normal form proposed in Theorem \ref{theorem:existence} which finally proves this Theorem.

\begin{proof}[Proof of Theorem \ref{theorem:existence}]
	In order to prove Theorem \ref{theorem:existence}, the following lemma is useful:
	\begin{lemma}\label{LEM:TRANSFORMATION_LEMMA}
		Under the conditions given in Theorem \ref{theorem:existence}, the following statements are true:
		\begin{enumerate}[label=\alph*)]
			\item The auxiliary system \eqref{eq:auxiliary_system_step_1} generated by Step 1 of the  algorithm proposed in Section \ref{ssec:description_of_proposed_algortihm} satisfies the conditions \eqref{eq:rank_of_reduced_observability_matrix} and \eqref{eq:relative_degree_condition}.
			\item The transformation matrix $\vect{T}$ constructed by the proposed algorithm in Section \ref{ssec:description_of_proposed_algortihm} is guaranteed to be non-singular, regardless of the specific values of $\beta_{j,k,l}$.%
			\item There exists a unique solution of the system of equations \eqref{eq:beta_equation_sys} in Step 3 of the algorithm proposed in Section \ref{ssec:description_of_proposed_algortihm}.
			\item If the transformation matrix $\vect{T}$ is constructed according to the algorithm in Section \ref{ssec:description_of_proposed_algortihm}, then the dynamic matrix $\vect{\bar{A}}$ of the transformed system \eqref{eq:transformed_system} given in \eqref{eq:matrices_of_transformed_system_A} takes the proposed form \eqref{eq:transformed_dynamic_matrix}.
			\item If the transformation matrix $\vect{T}$ is constructed according to the algorithm in Section \ref{ssec:description_of_proposed_algortihm}, then the unknown-input matrix $\vect{\bar{D}}$ of the transformed system \eqref{eq:transformed_system} given in \eqref{eq:matrices_of_transformed_system_D} takes the proposed form \eqref{eq:transformed_disturbance_matrix}.
			\item If the transformation matrix $\vect{T}$ is constructed according to the algorithm in Section \ref{ssec:description_of_proposed_algortihm}, then the output matrix $\vect{\bar{C}}$ of the transformed system \eqref{eq:transformed_system} given in \eqref{eq:matrices_of_transformed_system_C} takes the proposed form \eqref{eq:transformed_output_matrix}.
		\end{enumerate}	
	\end{lemma}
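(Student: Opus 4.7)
The plan is to prove the six statements of Lemma \ref{LEM:TRANSFORMATION_LEMMA} in the order (a)$\to$(c)$\to$(b)$\to$(d)$\to$(e)$\to$(f), because each subsequent item relies on structural properties established in the previous ones. The argument is essentially a bookkeeping argument tying together invariants maintained by the iterative algorithm of Section \ref{ssec:description_of_proposed_algortihm} with the algebraic characterisation of strong observability via the Rosenbrock matrix \eqref{eq:Rosenbrock_matrix}.

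For \textbf{(a)}, I would argue by induction on the iterations of Step~1. The invariant to maintain is that after each pass the stacked matrix $\vect{Z}$ has full row rank, that $\vect{Z}_j\T\vect{D}=\vect{0}$ for every $j$ whose flag has already been cleared, and that $\vect{W}$ collects exactly the independent rows of $\vect{Z}\vect{D}$ produced so far. Case~1 extends $\vect{W}$, while Case~2 kills the rank-deficient row via \eqref{eq:Z_j_update}, preserving $\vect{Z}_j\T\vect{D}=\vect{0}$; Sub-case~2.2 appends a genuinely new direction to $\vect{Z}$. Termination must happen in at most $n$ steps because each non-trivial update strictly increases $\rank\vect{Z}$. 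I would then invoke the strong observability hypothesis \eqref{eq:strong_observability_condition_rosenbrock} to argue that the algorithm cannot stop with $\rank\vect{Z}<n$: if it did, the final $\vect{Z}$ together with $\vect{W}$ would expose a value of $s$ at which the Rosenbrock matrix drops rank, contradicting \eqref{eq:strong_observability_condition_rosenbrock}. Once the algorithm finishes with $\rank\vect{Z}=n$, the construction \eqref{eq:output_transformation_matrix_construction} and the sorting \eqref{eq:sorted_index}--\eqref{eq:mu_sort_algorithm} directly yield \eqref{eq:rank_of_reduced_observability_matrix}, while the identities $\vect{Z}_j\T\vect{D}=\vect{0}$ preserved by Case~2 translate into the relative-degree inequalities \eqref{eq:relative_degree_condition} because the rows of $\vect{Z}_j$ are, by construction, of the form $\vect{\check{c}}_j\T \vect{\check{A}}^{i}$.

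For \textbf{(c)}, the matrix $\Abeta^{(j)}$ is block upper triangular up to the Toeplitz blocks $\Abeta^{(j)}_{r,r}$ on its diagonal (see the three cases of \eqref{eq:H_beta_j_k_l}: the $r<s$ blocks are strictly upper triangular in a generalised sense, the $r=s$ blocks are unit upper triangular). Hence its determinant equals a product of $1$'s, so $\Abeta^{(j)}$ is invertible and \eqref{eq:beta_equation_sys} has a unique solution. For \textbf{(b)}, I would show that the columns of $\vect{T}$ in \eqref{eq:transformation_matrix}, built via \eqref{eq:t_mu_j_major} and \eqref{eq:t_i}, form a basis independently of the coefficients $\beta_{j,k,l}$. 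The key observation is that $\obsvred\vect{T}$ is block lower triangular with unit entries at positions guaranteed by \eqref{eq:t_mu_j_major}: the ``anchor'' columns $\vect{t}_{\mu_1+\dots+\mu_j}$ satisfy $\vect{\check{c}}_k\T\vect{\check{A}}^{i}\vect{t}_{\mu_1+\dots+\mu_j}=\delta_{k,j}\delta_{i,\mu_j-1}$ by \eqref{eq:t_mu_j_major}, and formula \eqref{eq:t_i} subtracts off only contributions from later blocks, so the correction terms never alter the pivot positions used for the rank count. Consequently $\obsvred\vect{T}$ has the same non-zero pattern as a unit lower triangular matrix regardless of the $\beta$'s, and invertibility of $\obsvred$ from part~(a) yields $\det\vect{T}\neq 0$.

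For \textbf{(d)}--\textbf{(f)}, I would verify the three matrix identities \eqref{eq:matrices_of_transformed_system_A}--\eqref{eq:matrices_of_transformed_system_C} columnwise by computing $\vect{A}\vect{t}_i$, $\vect{D}$ and $\vect{\Gamma}\vect{C}\vect{t}_i$, expressing each result in the basis $\{\vect{t}_1,\dots,\vect{t}_n\}$. Using \eqref{eq:decompose_dynamic_matrix} to substitute $\vect{A}=\vect{\check{A}}-\vect{\Xi}\vect{C}$ and then applying the shift relation \eqref{eq:t_i} produces the Brunovsky-like chain $\vect{A}\vect{t}_{i}=\vect{t}_{i-1}+(\text{terms proportional to anchor columns })$; the ``anchor'' contributions collect into the $\alpha$-column and $\beta$-entries of \eqref{eq:transformed_dynamic_matrix}, and it is precisely the system \eqref{eq:beta_equation_sys} that selects the $\beta_{j,k,l}$ so that the linear combinations of $\vect{t}_{\mu_1+\dots+\mu_r}$ appearing in $\vect{A}\vect{t}_{\mu_1+\dots+\mu_j}$ are consistent with \eqref{eq:transformed_dynamic_matrix}. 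Item~(e) follows from part~(a): the orthogonality $\vect{\check{c}}_j\T\vect{\check{A}}^{i}\vect{D}=\vect{0}$ for $i<\mu_j-1$ forces all but the bottom row of each block of $\vect{T}^{-1}\vect{D}$ to vanish. Item~(f) is immediate once $\vect{\Gamma}$ is built from \eqref{eq:output_transformation_matrix_construction}, because $\vect{\psi}_{j_k,1}\T\vect{C}\vect{t}_i=\vect{\check{c}}_k\T\vect{t}_i$ evaluates to the first row of the $k$-th block of $\obsvred\vect{T}$, which is a standard basis vector by the construction \eqref{eq:t_mu_j_major} and the lower-triangular argument used for part~(b).

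The main obstacle I expect is part~(a): proving that the iterative procedure terminates with $\sum_j\nu_j=n$ and $\rank\vect{Z}=n$ requires converting the step-wise rank growth into a contradiction with \eqref{eq:strong_observability_condition_rosenbrock}. All subsequent parts are essentially direct calculations once the invariants $\vect{Z}_j\T\vect{D}=\vect{0}$ and the upper-triangular structure of $\Abeta^{(j)}$ are in hand.
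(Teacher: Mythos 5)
Your overall architecture matches the paper's, and parts (b), (d) and (e) are essentially the arguments the paper gives (the paper proves (b) by testing an arbitrary vanishing linear combination of the columns of $\vect{T}$ against the rows $\vect{\check{c}}_{\bar{j}}\T\vect{\check{A}}^{\bar{i}}$ of $\obsvred$, which is your ``pivot pattern of $\obsvred\vect{T}$'' argument in dual form). For (a) the paper does not run the induction-plus-Rosenbrock contradiction you propose; it defers entirely to the structural decomposition results of Chen et al.\ (steps SCB.1--SCB.3), reads off the integrator-chain form \eqref{eq:integrator_chain_single_output}, and obtains \eqref{eq:rank_of_reduced_observability_matrix} and \eqref{eq:relative_degree_condition} from it. Your route is more self-contained, but the step you yourself flag as the obstacle --- showing that termination with $\rank\vect{Z}<n$ contradicts \eqref{eq:strong_observability_condition_rosenbrock} --- is exactly the content you would have to supply, and it is only asserted, not proved.

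There are two genuine gaps. First, in (c) your claim that $\Abeta^{(j)}$ is ``block upper triangular up to the Toeplitz blocks'' is false: the sub-diagonal blocks $\Abeta^{(j)}_{r,s}$ with $r>s$ in \eqref{eq:H_beta_j_k_l} contain the entries $\vect{\check{c}}_r\T\vect{\check{A}}^{k}\vect{t}_{\mu_1+\dots+\mu_s}$ for $k\geq\mu_r$, which are not forced to vanish by \eqref{eq:t_mu_j_major} (that relation only controls powers $k\leq\mu_r-1$). Hence $\det\Abeta^{(j)}$ is not a product of the unit diagonals by inspection; the paper has to take iterated Schur complements with respect to the unitriangular diagonal blocks and verify at each stage that the zero pattern and the unit diagonals are preserved, see \eqref{eq:proof_Abeta_2}--\eqref{eq:proof_Abeta_6}. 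Second, in (f) the conclusion is not ``immediate'' from \eqref{eq:t_mu_j_major} and the triangularity used in (b): for $\bar{j}>j$ the products $\vect{\check{c}}_{\bar{j}}\T\vect{\check{A}}^{i}\vect{t}_{\mu_1+\dots+\mu_j}$ with $\mu_{\bar{j}}\leq i\leq\mu_j-1$ are not rows of $\obsvred$ applied to an anchor column and are generically non-zero; their cancellation in $\vect{\check{c}}_{\bar{j}}\T\vect{t}_{\mu_1+\dots+\mu_j-i}$ is achieved only because the $\beta_{j,r,l}$ solve \eqref{eq:beta_equation_sys} (this is \eqref{eq:proof_Cbar_c2} in the paper). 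So the form \eqref{eq:transformed_output_matrix} of $\vect{\bar{C}}$, unlike the invertibility of $\vect{T}$, does depend on the specific values of the $\beta$'s, and your proof of (f) as written would not go through.
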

	The proof of Lemma \ref{LEM:TRANSFORMATION_LEMMA} is provided in Appendix \ref{app:transformation_lemma}. Note that Lemma \ref{LEM:TRANSFORMATION_LEMMA}a) is a prerequisite for the proofs of Lemma \ref{LEM:TRANSFORMATION_LEMMA}b) to \ref{LEM:TRANSFORMATION_LEMMA}f). Lemma \ref{LEM:TRANSFORMATION_LEMMA}b) ensures $\vect{T}$ to be non-singular. Moreover, $\vect{\Gamma}$ is ensured to be non-singular which follows directly from the results presented in \citep[Section 5.3, pages 119-127]{Chen2004}.
	Lemma \ref{LEM:TRANSFORMATION_LEMMA}c) states that a unique solution of the system of equations \eqref{eq:beta_equation_sys} exists, which ensures the existence of the coefficients $\beta_{j,k,l}$ of the dynamic matrix $\vect{\bar{A}}$ of the transformed system. Finally, from Lemma \ref{LEM:TRANSFORMATION_LEMMA}d), \ref{LEM:TRANSFORMATION_LEMMA}e) and \ref{LEM:TRANSFORMATION_LEMMA}f) it follows directly that the transformed system takes the proposed observer normal form, which completes the proof of Theorem \ref{theorem:existence}.
\end{proof}

\section{Example}\label{sec:Simulation_examples}

In order to demonstrate the effectiveness of the proposed method, it is applied to a linearized model of the lateral motion of a light aircraft taken from \citep{Mudge1988}. Note that another finite-time approach for the robust estimation of the state variables is already applied to this model in \citep{Floquet2006}. However, this approach requires bounded state variables in order to ensure convergence of the observer.

The system of order $n = 7$ with $2$ control inputs, $m = 1$ unknown input and $p = 2$ outputs is given by
\begin{align}\label{eq:tutorial_original_system}
\begin{split}
\dot{\vect{x}} &=  \vect{A}\vect{x} + \vect{B}\vect{u} + \vect{D}\pert\text{,}\\
\vect{y} &= \vect{C}\vect{x}\comma
\end{split}
\end{align}
where
\begin{align}\label{eq:tutorial_original_system_matrices}
\resizebox{.84\width}{!}{$
\vect{A} =
\begin{bmatrix}
-0.3 & 0 & -33 & 9.81 & 0 & -5.4 & 0 \\
-0.1 & -8.3 & 3.75 & 0 & 0 & 0 & -28.6 \\
0.37 & 0 & -0.64 & 0 & 0 & -9.5 & 0 \\
0 & 1 & 0 & 0 & 0 & 0 & 0 \\
0 & 0 & 1 & 0 & 0 & 0 & 0 \\
0 & 0 & 0 & 0 & 0 & -10 & 0 \\
0 & 0 & 0 & 0 & 0 & 0 & -5 \\
\end{bmatrix}\text{,}\quad
\vect{B} =
\begin{bmatrix}
0 & 0 \\
0 & 0 \\
0 & 0 \\
0 & 0 \\
0 & 0 \\
20 & 0 \\
0 & 10
\end{bmatrix}\text{,}\quad
\vect{D} =
\begin{bmatrix}
0 \\ 0 \\ 0 \\ 0 \\ 0 \\ 20 \\ 0
\end{bmatrix}\text{,}\quad
\vect{C} =
\begin{bmatrix}
0 & 1 & 0 & 0 & 0 & 0 & 0 \\
0 & 0 & 0 & 0 & 1 & 0 & 0
\end{bmatrix}.$}
\end{align}
The state vector ${\vect{x} = \begin{bmatrix} v & p & r & \phi & \psi & \zeta & \xi \end{bmatrix}\T}$ consists of the sideslip velocity $v$, the roll rate $p$, the yaw rate $r$, the roll angle $\phi$, the yaw angle $\psi$, the rudder angle $\zeta$ and the aileron angle $\xi$. The control input $\vect{u} = \begin{bmatrix} \zeta_c & \xi_c \end{bmatrix}\T$ is given by the rudder angle demand $\zeta_c$ and the aileron angle demand $\xi_c$. For the unknown input $\pert$ a bounded actuator fault in the rudder is considered. The output $\vect{y} = \begin{bmatrix} y_1 & y_2 \end{bmatrix}\T$ provides measurements of the roll rate $m$ and the yaw angle $\psi$. Note that the considered system is strongly observable which can be shown by means of the rank condition \eqref{eq:strong_observability_condition_rosenbrock} regarding the Rosenbrock matrix. Furthermore, the considered system \eqref{eq:tutorial_original_system} is unstable since $\vect{A}$ has two eigenvalues with non-negative real part located at $s_1 = 0$ and $s_2 = 0.1219$. For this reason, all the derivatives of the output may be unbounded and, thus, a direct application of the RED as a state observer without any prestabilization of the estimation error dynamics is not possible.

In the following, the algorithm proposed in Section \ref{ssec:description_of_proposed_algortihm} is applied to system \eqref{eq:tutorial_original_system} in order to provide a system description in the presented observer normal form. A robust observer is designed according to Section \ref{ssec:robust_observer_design} in order to provide exact estimates of the state vector in the presence of an unknown input $\pert$. The efficiency of this observer is confirmed by numerical simulations. Furthermore, a straightforward extension of the proposed observer is used in order to reconstruct the unknown input $\pert$ which is incorporated into a robust state-feedback control law in order to asymptotically stabilize the plant in the origin despite the unknown input $\pert$.

\subsection{Transformation into the proposed observer normal form}\label{ssec:tutorial_transformation_into_observability_form}

In the following, the construction of the state transformation matrix $\vect{T}$ and the output transformation matrix $\vect{\Gamma}$ based on the algorithm presented in Section \ref{ssec:description_of_proposed_algortihm} is illustrated step-by-step. It is pointed out that all numbers are presented rounded to four decimal places.

\subsubsection*{Step 1: Output transformation and output-feedback based decomposition of the dynamic matrix}

\begin{adjustwidth}{1.5em}{0pt}
First of all, the initialization of the required quantities yields
\begin{align}\label{eq:init_step_1_example}
\begin{split}
&\vect{Z} = \vect{C} = \begin{bmatrix}
0 & 1 & 0 & 0 & 0 & 0 & 0 \\
0 & 0 & 0 & 0 & 1 & 0 & 0
\end{bmatrix} \comma  \quad
\vect{Z}_1 = \vect{c}_1\T = \begin{bmatrix}
0 & 1 & 0 & 0 & 0 & 0 & 0 
\end{bmatrix} \comma  \\
&\vect{Z}_2 = \vect{c}_2\T = \begin{bmatrix}
0 & 0 & 0 & 0 & 1 & 0 & 0
\end{bmatrix} \comma \quad
\vect{\Psi}_1 = \vect{e}_1\T = \begin{bmatrix}
1 & 0
\end{bmatrix} \comma  \quad
\vect{\Psi}_2 = \vect{e}_2\T = \begin{bmatrix}
0 & 1
\end{bmatrix} \comma  \\
&\vect{W} =
\begin{bmatrix}
w_1 \\
w_2
\end{bmatrix}
= 
\begin{bmatrix}
0 \\
0 \\
\end{bmatrix} \comma \quad
\nu_1 = 1 \comma \quad
\nu_2 = 1 \comma \quad
\vect{f} =
\begin{bmatrix}
f_1 \\ f_2
\end{bmatrix} = 
\begin{bmatrix}
1 \\ 1
\end{bmatrix}.
\end{split}
\end{align}
\textbf{Iteration 1:} Starting with the first non-zero flag ${f_1 = 1}$ leads to Case 2 since ${\vect{z}_{1,1}\T \vect{D} = 0}$. Thus, ${\zeta_{1,1,1} = \zeta_{1,1,2} = 0}$ and $\vect{z}_{1,1}$ and $\vect{\Psi}_1$ remain unchanged. Then, Sub-case 2.2 is entered and the updates
\begin{align}\label{eq:example_update_1_1}
\begin{split}
&\vect{Z}_1 \leftarrow
\begin{bmatrix}
\vect{Z}_1 \\
\vect{z}_{1,1}\T\vect{A}
\end{bmatrix}
=
\begin{bmatrix}
0 & 1 & 0 & 0 & 0 & 0 & 0 \\
-0.1 & -8.3 & 3.75 & 0 & 0 & 0 & -28.6 
\end{bmatrix} \comma  \quad
\vect{\Psi}_1 \leftarrow
\begin{bmatrix}
\vect{\Psi}_1 \\
\vect{0} \T
\end{bmatrix}
=
\begin{bmatrix}
1 & 0 \\
0 & 0
\end{bmatrix} \comma  \\
&\vect{Z} \leftarrow
\begin{bmatrix}
\vect{Z}_1 \\ \vect{Z}_2
\end{bmatrix}
\begin{bmatrix}
0 & 1 & 0 & 0 & 0 & 0 & 0 \\
-0.1 & -8.3 & 3.75 & 0 & 0 & 0 & -28.6 \\
0 & 0 & 0 & 0 & 1 & 0 & 0
\end{bmatrix} \comma  \quad
\nu_1 \leftarrow \nu_1 + 1 = 2\comma
\end{split}
\end{align}
are carried out. 

\noindent\textbf{Iteration 2:} Continue with the non-zero flag ${f_2 = 1}$. Again, ${\vect{z}_{2,1}\T \vect{D} = 0}$ leads to Case 2, where ${\zeta_{2,1,1} = \zeta_{2,1,2} = 0}$ and $\vect{z}_{2,1}$ and $\vect{\Psi}_2$ remain unchanged. Sub-case 2.2 occurs and the updates
\begin{align}\label{eq:example_update_2_1}
\begin{split}
&\vect{Z}_2 \leftarrow
\begin{bmatrix}
\vect{Z}_2 \\
\vect{z}_{2,1}\T \vect{A}
\end{bmatrix}
=
\begin{bmatrix}
0 & 0 & 0 & 0 & 1 & 0 & 0 \\
0 & 0 & 1 & 0 & 0 & 0 & 0 
\end{bmatrix} \comma  \quad
\vect{\Psi}_2 \leftarrow
\begin{bmatrix}
\vect{\Psi}_2 \\ \vect{0}\T
\end{bmatrix}
=
\begin{bmatrix}
0 & 1 \\
0 & 0
\end{bmatrix} \comma  \\
&\vect{Z} \leftarrow
\begin{bmatrix}
\vect{Z}_1 \\ \vect{Z}_2
\end{bmatrix}
=
\begin{bmatrix}
0 & 1 & 0 & 0 & 0 & 0 & 0 \\
-0.1 & -8.3 & 3.75 & 0 & 0 & 0 & -28.6 \\
0 & 0 & 0 & 0 & 1 & 0 & 0 \\
0 & 0 & 1 & 0 & 0 & 0 & 0
\end{bmatrix} \comma  \quad
\nu_2 \leftarrow \nu_2 + 1 = 2\comma
\end{split}
\end{align}
are performed.

\noindent\textbf{Iterations 3 \& 4:} Both, ${\vect{z}_{1,2}\T \vect{D} = 0}$ and ${\vect{z}_{2,2}\T \vect{D} = 0}$ and, again, Case 2 and Sub-case 2.2 are entered which yields the updates
\begin{align}\label{eq:example_update_j_2}
\begin{split}
&\vect{Z}_1 \leftarrow 
\begin{bmatrix}
\vect{Z}_1 \\
\vect{z}_{1,2}\T\vect{A}
\end{bmatrix}
=
\begin{bmatrix}
0 & 1 & 0 & 0 & 0 & 0 & 0 \\
-0.1 & -8.3 & 3.75 & 0 & 0 & 0 & -28.6 \\
2.2475 & 68.89 & -30.225 &  -0.981 & 0 & -35.085 & 380.38
\end{bmatrix} \comma  \\
&\vect{\Psi}_1 \leftarrow
\begin{bmatrix}
\vect{\Psi}_1 \\ \vect{0}\T
\end{bmatrix}
=
\begin{bmatrix}
1 & 0 \\
0 & 0 \\
0 & 0
\end{bmatrix} \comma  \\
&\vect{Z}_2 \leftarrow
\begin{bmatrix}
\vect{Z}_2 \\
\vect{z}_{2,2}\T\vect{A}
\end{bmatrix}
=
\begin{bmatrix}
0 & 0 & 0 & 0 & 1 & 0 & 0 \\
0 & 0 & 1 & 0 & 0 & 0 & 0 \\
0.37 & 0 & -0.64 & 0 & 0 & -9.5 & 0
\end{bmatrix} \comma  \quad
\vect{\Psi}_2 \leftarrow
\begin{bmatrix}
\vect{\Psi}_2 \\ \vect{0}\T
\end{bmatrix}
=
\begin{bmatrix}
0 & 1 \\
0 & 0 \\
0 & 0
\end{bmatrix} \comma  \\
&\vect{Z} \leftarrow
\begin{bmatrix}
\vect{Z}_1 \\ \vect{Z}_2
\end{bmatrix}
= \begin{bmatrix}
0 & 1 & 0 & 0 & 0 & 0 & 0 \\
-0.1 & -8.3 & 3.75 & 0 & 0 & 0 & -28.6 \\
2.2475 & 68.89 & -30.225 &  -0.981 & 0 & -35.085 & 380.38 \\
0 & 0 & 0 & 0 & 1 & 0 & 0 \\
0 & 0 & 1 & 0 & 0 & 0 & 0 \\
0.37 & 0 & -0.64 & 0 & 0 & -9.5 & 0
\end{bmatrix} \comma  \\
&\nu_1 \leftarrow \nu_1 + 1 = 3\comma \quad \nu_2 \leftarrow \nu_2 + 1 = 3.
\end{split}
\end{align}
\noindent\textbf{Iteration 5:} Again, the non-zero flag ${f_1 = 1}$ is considered. Since ${\vect{z}_{1,3}\T \vect{D} = -701.7}$ increases the rank of $\vect{W}$, Case 1 is applied, i.e.,
\begin{align}\label{eq:example_update_1_3}
f_1 \leftarrow 0\comma \quad
\vect{W} \leftarrow
\begin{bmatrix}
	w_1 \\
	w_2
\end{bmatrix}
= 
\begin{bmatrix}
	-701.7 \\
	0 \\
\end{bmatrix} .
\end{align}
\noindent\textbf{Iteration 6:} The flag ${f_2 = 1}$ is the only remaining non-zero flag. Considering ${\vect{z}_{2,3}\T \vect{D} = -190}$ leads to Case 2, since $\vect{z}_{2,3}\T \vect{D}$ can be expressed in terms of the rows of $\vect{W}$, i.e., ${\vect{z}_{2,3}\T \vect{D} = \zeta_{2,3,1} w_1 + \zeta_{2,3,2}w_2}$, where
\begin{align}\label{eq:example_zeta_2_3}
\zeta_{2,3,1} = 0.2708\comma \quad \zeta_{2,3,2} = 0.
\end{align}
Updating the rows of $\vect{Z}_2$ yields
\begin{align}\label{eq:example_Z_2_update}
\begin{split}
&\vect{z}_{2,1}\T \leftarrow \vect{z}_{2,1}\T - \zeta_{2,3,1} \vect{z}_{1,1}\T =
\begin{bmatrix}
0 & -0.2708 & 0 & 0 & 1 & 0 & 0
\end{bmatrix} \\
&\vect{z}_{2,2}\T \leftarrow \vect{z}_{2,2}\T - \zeta_{2,3,1} \vect{z}_{1,2}\T =
\begin{bmatrix}
0.0271 & 2.2474 & -0.0154 & 0 & 0 & 0 & 7.7441
\end{bmatrix} \\
&\vect{z}_{2,3}\T \leftarrow \vect{z}_{2,3}\T - \zeta_{2,3,1} \vect{z}_{1,3}\T =
\begin{bmatrix}
-0.2386 & -18.6534 & 7.5441 & 0.2656 & 0 & 0 & -102.9959
\end{bmatrix}.
\end{split}
\end{align}
Furthermore,
\begin{align}\label{eq:example_Psi_2_update}
&\vect{\Psi}_{2} \leftarrow \vect{\Psi}_{2} - \zeta_{2,3,1}\vect{\Psi}_{1} =
\begin{bmatrix}
-0.2708 & 1 \\
0 & 0 \\
0 & 0
\end{bmatrix}
\end{align}
is updated. Then, Sub-case 2.2 occurs, i.e.,
\begin{align}\label{eq:example_update_2_3}
\begin{split}
&\vect{Z}_2 \leftarrow 
\begin{bmatrix}
\vect{Z}_2 \\
\vect{z}_{2,3}\T\vect{A}
\end{bmatrix}
=
\begin{bmatrix}
0 & -0.2708 & 0 & 0 & 1 & 0 & 0 \\
0.0271 & 2.2474 & -0.0154 & 0 & 0 & 0 & 7.7441 \\
-0.2386 & -18.6534 & 7.5441 & 0.2656 & 0 & 0 & -102.9959 \\
4.7282 & 155.089 & -66.9061 & -2.3403 & 0 & -70.3803 & 1048.467
\end{bmatrix} \comma  \\
&\vect{\Psi}_2 \leftarrow
\begin{bmatrix}
\vect{\Psi}_2 \\
\vect{0}\T
\end{bmatrix}
=
\begin{bmatrix}
-0.2708 & 1 \\
0 & 0 \\
0 & 0 \\
0 & 0
\end{bmatrix}\comma  \\
&\vect{Z} \leftarrow
\begin{bmatrix}
\vect{Z}_1 \\ \vect{Z}_2
\end{bmatrix}
=
\begin{bmatrix}
0 & 1 & 0 & 0 & 0 & 0 & 0 \\
-0.1 & -8.3 & 3.75 & 0 & 0 & 0 & -28.6 \\
2.2475 & 68.89 & -30.225 &  -0.981 & 0 & -35.085 & 380.38 \\
0 & -0.2708 & 0 & 0 & 1 & 0 & 0 \\
0.0271 & 2.2474 & -0.0154 & 0 & 0 & 0 & 7.7441 \\
-0.2386 & -18.6534 & 7.5441 & 0.2656 & 0 & 0 & -102.9959 \\
4.7282 & 155.089 & -66.9061 & -2.3403 & 0 & -70.3803 & 1048.467
\end{bmatrix} \comma  \\
&\nu_2 \leftarrow \nu_2 + 1 = 4\comma
\end{split}
\end{align}

\noindent\textbf{Iteration 7:} The flag ${f_2 = 1}$ is still non-zero. Hence, ${\vect{z}_{2,4}\T\vect{D} = -1407.6058}$ is examined which results in Case 2 since ${\vect{z}_{2,4}\T\vect{D} = \zeta_{2,4,1} w_1 + \zeta_{2,4,2}w_2}$ with
\begin{align}\label{eq:example_zeta_2_4}
\zeta_{2,4,1} = 2.006\comma \quad \zeta_{2,4,2} = 0.
\end{align}
The rows of $\vect{Z}_2$ are updated as
\begin{align}\label{eq:example_Z_2_update_4}
\begin{split}
&\vect{z}_{2,2}\T \leftarrow \vect{z}_{2,2}\T - \zeta_{2,4,1} \vect{z}_{1,1}\T =
\begin{bmatrix}
0.0271 & 0.2414 & -0.0154 & 0 & 0 & 0 & 7.7441
\end{bmatrix} \\
&\vect{z}_{2,3}\T \leftarrow \vect{z}_{2,3}\T - \zeta_{2,4,1} \vect{z}_{1,2}\T =
\begin{bmatrix}
-0.038 & -2.0037 & 0.0216 & 0.2656 & 0 & 0 & -45.6244
\end{bmatrix} \\
&\vect{z}_{2,4}\T \leftarrow \vect{z}_{2,4}\T - \zeta_{2,4,1} \vect{z}_{1,3}\T =
\begin{bmatrix}
0.2197 & 16.896 & -6.2749 & -0.3724 & 0 & 0 & 285.427
\end{bmatrix}
\end{split}
\end{align}
and
\begin{align}\label{eq:example_Psi_2_update_4}
&\vect{\Psi}_{2} \leftarrow \vect{\Psi}_{2} - \zeta_{2,4,1}\begin{bmatrix}
\vect{0}\T \\
\vect{\Psi}_{1}
\end{bmatrix} =
\begin{bmatrix}
-0.2708 & 1 \\
-2.006 & 0 \\
0 & 0 \\
0 & 0
\end{bmatrix}.
\end{align}
Then, Sub-case 2.1 is entered and, hence,
\begin{align}\label{eq:example_Z_final}
&\vect{Z} \leftarrow
\begin{bmatrix}
	\vect{Z}_1 \\ \vect{Z}_2
\end{bmatrix}
=
\begin{bmatrix}
0 & 1 & 0 & 0 & 0 & 0 & 0 \\
-0.1 & -8.3 & 3.75 & 0 & 0 & 0 & -28.6 \\
2.2475 & 68.89 & -30.225 &  -0.981 & 0 & -35.085 & 380.38 \\
0 & -0.2708 & 0 & 0 & 1 & 0 & 0 \\
0.0271 & 0.2414 & -0.0154 & 0 & 0 & 0 & 7.7441 \\
-0.038 & -2.0037 & 0.0216 & 0.2656 & 0 & 0 & -45.6244 \\
0.2197 & 16.896 & -6.2749 & -0.3724 & 0 & 0 & 285.427 \\
\end{bmatrix}
\end{align}
is updated and the corresponding flag is set to zero, i.e.,
\begin{align}\label{eq:example_set_f2_zero}
f_2 \leftarrow 0.
\end{align}
It is noted that now $\vect{f} = \vect{0}$ and, thus, the iterative procedure is stopped. Furthermore, $\vect{Z}$ given in \eqref{eq:example_Z_final} has become an invertible $n \times n$ matrix and ${\nu_1 + \nu_2 = n}$. From ${\nu_2 = 4 > \nu_1 = 3}$ it follows that ${j_1 = 2}$ and ${j_2 = 1}$ and, thus,
\begin{align}\label{eq:example_step1_final_result}
\begin{split}
&\mu_1 = \nu_2 = 4\comma \quad \mu_2 = \nu_1 = 3 \comma \quad
\vect{\Gamma} =
\begin{bmatrix}
\vect{\psi}_{2,1}\T \\
\vect{\psi}_{1,1}\T \\
\end{bmatrix}
=
\begin{bmatrix}
-0.2708 & 1 \\
1 & 0
\end{bmatrix} \comma \quad
\vect{\Xi} = \vect{Z}^{-1}
\begin{bmatrix}
\vect{\psi}_{1,2}\T \\
\vect{\psi}_{1,3}\T \\
\vect{0}\T \\
\vect{\psi}_{2,2}\T \\
\vect{\psi}_{2,3}\T \\
\vect{\psi}_{2,4}\T \\
\vect{0}\T
\end{bmatrix}
=
\begin{bmatrix}
0 & 0 \\
0 & 0 \\
0 & 0 \\
0 & 0 \\
-2.006 & 0 \\
0 & 0 \\
0 & 0
\end{bmatrix}
.
\end{split}
\end{align}
Finally, the matrices of the auxiliary system are given by
\begin{align}\label{eq:example_matrices_of_auxiliary_system}
\begin{split}
&\vect{\check{A}} = \vect{A} + \vect{\Xi}\vect{C} =
\begin{bmatrix}
-0.3 & 0 & -33 & 9.81 & 0 & -5.4 & 0 \\
-0.1 & -8.3 & 3.75 & 0 & 0 & 0 & -28.6 \\
0.37 & 0 & -0.64 & 0 & 0 & -9.5 & 0 \\
0 & 1 & 0 & 0 & 0 & 0 & 0 \\
0 & -2.006 & 1 & 0 & 0 & 0 & 0 \\
0 & 0 & 0 & 0 & 0 & -10 & 0 \\
0 & 0 & 0 & 0 & 0 & 0 & -5
\end{bmatrix}\comma \quad
\vect{\check{C}} = \vect{\Gamma}\vect{C} =
\begin{bmatrix}
0 & 0 \\
-0.2708 & 1 \\
0 & 0 \\
0 & 0 \\
1 & 0 \\
0 & 0 \\
0 & 0 \\
\end{bmatrix}\T
.
\end{split}
\end{align}
\end{adjustwidth}

\subsubsection*{Step 2: Calculation of $p$ columns of the state transformation matrix}

\begin{adjustwidth}{1.5em}{0pt}
	Calculation of the matrix
	\begin{align}\label{eq:example_reduced_observability_matrix}
	\obsvred =
	\begin{bmatrix}
	\vect{\check{c}}_1\T \\
	\vect{\check{c}}_1\T \vect{\check{A}} \\
	\vect{\check{c}}_1\T \vect{\check{A}}^{2} \\
	\vect{\check{c}}_1\T \vect{\check{A}}^{3} \\
	\vect{\check{c}}_2\T \\
	\vect{\check{c}}_2\T \vect{\check{A}} \\
	\vect{\check{c}}_2\T \vect{\check{A}}^2 \\
	\end{bmatrix}
	=
	\begin{bmatrix}
	0 & -0.2708 & 0 & 0 & 1 & 0 & 0 \\
	0.0271 & 0.2414 & -0.0154 & 0 & 0 & 0 & 7.7441 \\
	-0.038 & -2.0037 & 0.0216 & 0.2656 & 0 & 0 & -45.6244 \\
	0.2197 & 16.896 & -6.2749 & -0.3724 & 0 & 0 & 285.427 \\
	0 & 1 & 0 & 0 & 0 & 0 & 0 \\
	-0.1 & -8.3 & 3.75 & 0 & 0 & 0 & -28.6 \\
	2.2475 & 68.89 & -30.225 & -0.981 & 0 & -35.085 & 380.38
	\end{bmatrix}
	\end{align}
	allows for determining the columns $\vect{t}_4$ and $\vect{t}_7$ of the state transformation matrix $\vect{T}$ as
	\begin{align}\label{eq:example_t_mu_j_major}
	\begin{bmatrix}
	\vect{t}_{4} & \vect{t}_{7}
	\end{bmatrix}
	=
	\obsvred^{-1}
	\begin{bmatrix}
	\vect{e}_{4} & \vect{e}_{7}
	\end{bmatrix} =
	\begin{bmatrix}
	-1.6452 & 0 \\
	0 & 0 \\
	0 & 0 \\
	0.7529 & 0 \\
	0 & 0 \\
	-0.0641 & -0.0285 \\
	0.0058 & 0
	\end{bmatrix}
	.
	\end{align}
	
\end{adjustwidth}

\subsubsection*{Step 3: Calculation of the coefficients $\beta_{j,k,l}$}

\begin{adjustwidth}{1.5em}{0pt}
	The coefficients $b_{1,2,l}$ are calculated in the following way:
	\begin{enumerate}
		\item In the given example, the matrix $\Abeta^{(1)}$ consists of one single submatrix which actually is a scalar and equal to one, i.e.,
		\begin{align}\label{eq:tutorial_H_beta_j}
		\Abeta^{(1)} =
		\begin{bmatrix}
		\Abeta^{(1)}_{2,2}
		\end{bmatrix}
		=
		1.
		\end{align}
		\item The corresponding vector $\vect{\bbeta}^{(1)}$ is also scalar in this case and is given by
		\begin{align}\label{eq:_tutorial_bbeta_j}
		\vect{\bbeta}^{(1)} =
		\begin{bmatrix}
		\vect{\bbeta}^{(1)}_{2}
		\end{bmatrix}
		=
		\vect{\check{c}}_2\T \vect{\check{A}}^{3}\vect{t}_4
		=
		-14.9336.
		\end{align}
		\item The coefficient $\beta_{1,2,3}$ is given by
		\begin{align}\label{eq:tutorial_beta_equation_sys}
		\vect{\beta}^{(1)} =
		\beta_{1,2,3}
		=
		\left( \Abeta^{(1)} \right)^{-1} \bbeta^{(1)}
		=
        -14.9336.
		\end{align}
		The remaining coefficients are set to zero, i.e., $\beta_{1,2,1} = \beta_{1,2,2} = 0$.
	\end{enumerate}
\end{adjustwidth}

\subsubsection*{Step 4: Construction of the state transformation matrix}

\begin{adjustwidth}{1.5em}{0pt}
	The remaining columns of $\vect{T}$ are given by
	\begin{align}\label{eq:tutorial_t_i_minor}
	\begin{split}
	&\vect{t}_1 = \vect{\check{A}}^3 \vect{t}_4 - \beta_{1,2,3} \vect{\check{A}}^2 \vect{t}_7 \text{,}\\
	&\vect{t}_2 = \vect{\check{A}}^2 \vect{t}_4 - \beta_{1,2,3} \vect{\check{A}} \vect{t}_7 \text{,}\\
	&\vect{t}_3 = \vect{\check{A}} \vect{t}_4 - \beta_{1,2,3} \vect{t}_7 \text{,}\\
	&\vect{t}_5 = \vect{\check{A}}^2 \vect{t}_7 \text{,}\\
	&\vect{t}_6 = \vect{\check{A}} \vect{t}_7 \text{,}
	\end{split}
	\end{align}
	which finally yields the state transformation matrix
	\begin{align}\label{eq:tutorial_T}
	\vect{T} =
	\begin{bmatrix}
	-20.2951 & -3.6294 & 8.2259 & -1.6452 & -10.5207 & 0.1539 & 0 \\
	0 & 0 & 0 & 0 & 1 & 0 & 0 \\
	18.4529 & 1 & 0 & 0 & -2.8241 & 0.2708 & 0 \\
	0 & 0 & 0 & 0.7529 & 0 & 0 & 0 \\
	1 & 0 & 0 & 0 & 0.2708 & 0 & 0 \\
	21.5113 & -2.1511 & 0.2151 & -0.0641 & -2.8502 & 0.285 & -0.0285 \\
	-0.719 & 0.1438 & -0.0288 & 0.0058 & 0 & 0 & 0
	\end{bmatrix}.
	\end{align}
\end{adjustwidth}

Applying the state transformation $\vect{\bar{x}} = \vect{T}^{-1} \vect{x}$ and the output transformation $\vect{\bar{y}} = \vect{\Gamma}\vect{y}$ yields the system
\begin{align}\label{eq:tutorial_transformed_system}
\begin{split}
\overset{\vect{.}}{\vect{\bar{x}}} &= \vect{\bar{A}}\vect{\bar{x}} + \vect{\bar{B}}\vect{u} + \vect{\bar{D}}\pert\text{,}\\
\vect{\bar{y}} &= \vect{\bar{C}}\vect{\bar{x}}\comma
\end{split}
\end{align}
in the proposed observer normal form, where the matrices
\begin{align}\label{eq:tutorial_transformed_system_matrices}
\begin{split}
\vect{\bar{A}} =
\left[\begin{array}{cccc|ccc}
-6.4019 & 1 & 0 & 0 & 2.006 & 0 & 0 \\
-7.0093 & 0 & 1 & 0 & 11.1769 & 0 & 0 \\
0 & 0 & 0 & 1 & 6 & 0 & 0 \\
0 & 0 & 0 & 0 & 1.3281 & 0 & 0 \\
\hline
91.7924 & 0 & 0 & 0 & -17.8381 & 1 & 0 \\
593.4635 & 0 & 0 & 0 & -271.7324 & 0 & 1 \\
0 & 0 & 0 & -14.9336 & -1220.7850 & 0 & 0
\end{array}\right]
\text{,}\quad
\vect{\bar{D}} =
\begin{bmatrix}
0 \\
0 \\
0 \\
0 \\
\hline
0 \\
0 \\
-701.7
\end{bmatrix}
\text{,} \quad
\vect{\bar{C}} =
\begin{bmatrix}
1 & 0 \\
0 & 0 \\
0 & 0 \\
0 & 0 \\
\hline
0 & 1 \\
0 & 0 \\
0 & 0
\end{bmatrix}\T
\text{,}
\end{split}
\end{align}
are calculated according to \eqref{eq:matrices_of_transformed_system} and
\begin{align}\label{eq:example_Bbar}
\vect{\bar{B}} = \vect{T}^{-1}\vect{B} =
\begin{bmatrix}
	0 & 0 \\
	0 & 77.44050 \\
	0 & 39.5191 \\
	0 & 0 \\
	\hline
	0 & 0 \\
	0 & -286 \\
	-701.7 & -8406.3602
\end{bmatrix}.
\end{align}

\subsection{Design of a robust observer}\label{ssec:tutorial_robust_observer}
In order to reconstruct the state vector $\vect{\bar{x}}$ despite the unknown input $\pert$, a robust observer is designed according to Section \ref{ssec:robust_observer_design} which yields
\begin{align}\label{eq:tutorial_robust_observer_of_transformed_system}
\begin{split}
\overset{\vect{.}}{\vect{\hat{\bar{x}}}} &= \vect{\bar{A}}\vect{\hat{\bar{x}}} + \vect{\bar{B}}\vect{u} + \vect{\bar{\Pi}} \vect{\sigma}_{\vect{\bar{y}}} + \vect{\bar{l}}(\vect{\sigma}_{\vect{\bar{y}}})  \text{,}\\
\vect{\hat{\bar{y}}} &= \vect{\bar{C}}\vect{\hat{\bar{x}}}\comma
\end{split}
\end{align}
where
\begin{align}\label{eq:tutorial_output_error}
\vect{\sigma}_{\vect{\bar{y}}}  = \vect{\bar{y}} - \vect{\hat{\bar{y}}} =
\begin{bmatrix}
\sigma_1 & \sigma_5
\end{bmatrix}\T
\end{align}
is the output error,
\begin{align}\label{eq:tutorial_alpha_j_1}
\vect{\bar{\Pi}} =
\begin{bmatrix}
-6.4019 & 2.006 \\
-7.0093 & 11.1769 \\
0 & 6 \\
0 & 1.3281 \\
\hline
91.7924 & -17.8381 \\
593.4635 & -271.7324 \\
0 & -1220.7850
\end{bmatrix}
\end{align}
is the linear output injection matrix and
\begin{align}\label{eq:tutorial_nonlinear_output_injection}
\begin{split}
\vect{\bar{l}}(\vect{\sigma}_{\vect{\bar{y}}}) =
\left [ \begin{array}{cccc|ccc}
\kappa_{1,3}\krax{\sigma_1}{\frac{3}{4}} & \kappa_{1,2}\krax{\sigma_1}{\frac{1}{2}} & \kappa_{1,1}\krax{\sigma_1}{\frac{1}{4}} & \kappa_{1,0}\krax{\sigma_1}{0} \hspace*{0.2cm} & \hspace*{0.2cm} \kappa_{2,2}\krax{\sigma_5}{\frac{2}{3}} & \kappa_{2,1}\krax{\sigma_5}{\frac{1}{3}} & \kappa_{2,0}\krax{\sigma_5}{0}
\end{array} \right ]\T
\end{split}
\end{align}
is the nonlinear output injection vector. Note that the control input $\vect{u}$ is known and, thus, is taken into account by the observer \eqref{eq:tutorial_robust_observer_of_transformed_system}. Then, the resulting estimation error dynamics do not depend on $\vect{u}$.
Furthermore, since the unknown input $\pert$ does not directly act on the first subsystem, it is also possible to choose the nonlinear output injection of the first subsystem according to e.g. a non-robust but continuous finite-time differentiator \citep{CRUZZAVALA2016660}, i.e.,
\begin{align}\label{eq:tutorial_nonlinear_output_injection_continuous}
\begin{split}
\resizebox{.9\width}{!}{$
\vect{\bar{l}}(\vect{\sigma}_{\vect{\bar{y}}}) \curvearrowright
\vect{\bar{l}}'(\vect{\sigma}_{\vect{\bar{y}}}) =
\left [ \begin{array}{cccc|ccc}
\kappa'_{1,3}\krax{\sigma_1}{\frac{r_2}{r_1}} & \kappa'_{1,2}\krax{\sigma_1}{\frac{r_3}{r_1}} & \kappa_{1,1}'\krax{\sigma_1}{\frac{r_4}{r_1}} & \kappa_{1,0}'\krax{\sigma_1}{\frac{r_5}{r_1}} \hspace*{0.2cm} & \hspace*{0.2cm} \kappa_{2,2}\krax{\sigma_5}{\frac{2}{3}} & \kappa_{2,1}\krax{\sigma_5}{\frac{1}{3}} & \kappa_{2,0}\krax{\sigma_5}{0}
\end{array} \right ]\T\text{,}$}
\end{split}
\end{align}
where $r_i = 1 - (n-i)q$, $i = 1, \dots, n+1$ are the so-called dilation coefficients and the parameter $q \in (-1, 0)$ is the homogeneity degree to be chosen.

\subsection{Simulation}\label{ssec:tutorial_simulation}

In the simulation, the unknown input
\begin{align}\label{eq:tutorial_disturbance}
\pert = 0.008 + 0.01 \sin(2t) + 0.002 \cos(13t)
\end{align}
with an amplitude bounded by $L = 0.02$ is considered. Furthermore, no control signal is applied to the system, i.e., $\vect{u} = \vect{0}$, which results unbounded state variables due to the unstable eigenvalues of $\vect{A}$. The observer parameters
\begin{align}\label{eq:tutorial_kappa}
\begin{split}
&\kappa_{1,3} = 5.3348\text{,}\quad \kappa_{1,2} = 13.1556\text{,}\quad \kappa_{1,1} = 17.2047\text{,}\quad \kappa_{1,0} = 11\text{,}\\
& \kappa_{2,2} = 9.6484\text{,}\quad \kappa_{2,1} = 49.369\text{,}\quad \kappa_{2,0} = 123.4992\text{,}
\end{split}
\end{align}
of the nonlinear output injection \eqref{eq:tutorial_nonlinear_output_injection} are chosen according to \citep[Section 6.7]{Shtessel2013}. Note that the necessary condition ${\kappa_{2,0} > L \abs{\bar{d}_{7,1}} = 14.0340}$ given in Theorem \ref{THEOREM:OBSERVER} is satisfied.

The initial state vector of the plant is selected as ${\vect{x}(0) =\begin{bmatrix} -0.5 & 0.1 & 0.02 & 0.2 & -0.1 & -0.3 & 0.2 \end{bmatrix}\T}$ and the initial observer state vector is set to ${\vect{\hat{\bar{x}}}(0) = \vect{0}}$. The estimation error variables $\eta_i = x_i - \hat{x}_i$, $i = 1, \dots, 7$ obtained by a numerical simulation are shown in Figure \ref{fig:lateral_motion_aircraft}.
\begin{figure}[b]
	\begin{center} \vspace{0.4cm}
        \includegraphics{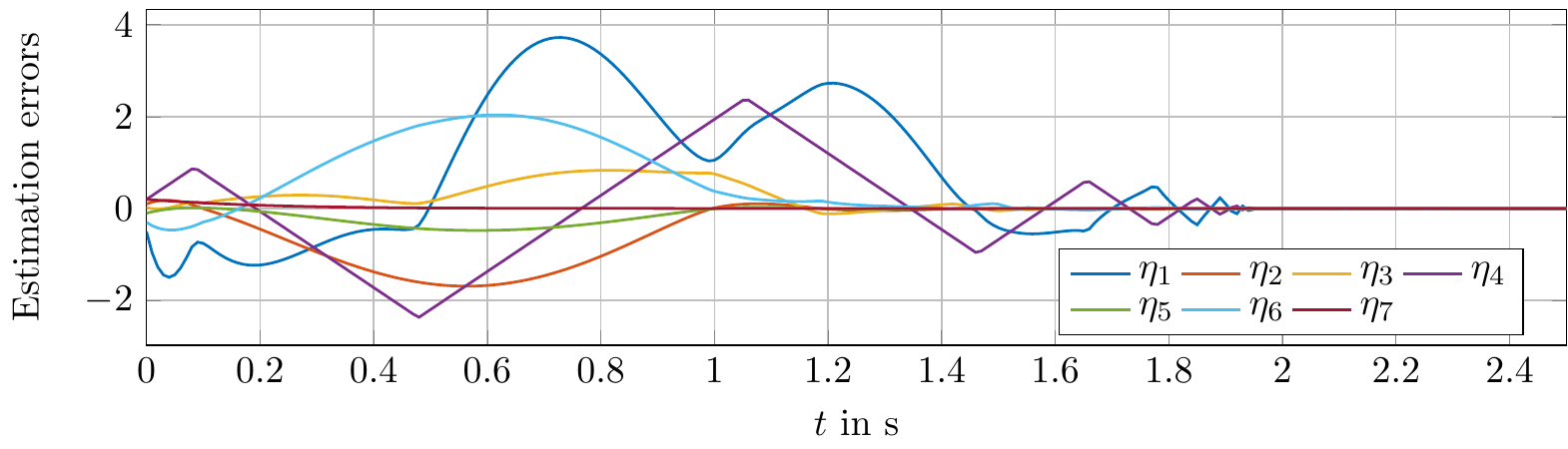}
		\caption{Estimation error variables $\eta_i = x_i - \hat{x}_i$, $i = 1, \dots, 7$. The estimation error variables vanish within finite-time despite the unknown input $\pert$.}\label{fig:lateral_motion_aircraft} 
	\end{center}
\end{figure}
After a finite convergence time of approximately $2$ s, the error variables vanish despite the unknown input $\pert$ and, thus, the state variables are exactly reconstructed.

\subsection{Reconstruction of the unknown input and robust state-feedback control}\label{ssec:tutorial_reconstruction_state_feedback}

For fault detection applications and robust state-feedback control, the knowledge of the unknown input $\pert$ in addition to the state variables is highly beneficial. The method for the robust observer design proposed in this article can be exploited in order to reconstruct the unknown input $\pert$. For this purpose, the state vector is augmented by $\pert$, i.e.,
\begin{align}\label{eq:tutorial_augmented_state_vector}
\vect{\chi} = 
\begin{bmatrix}
\vect{x}\T & \pert
\end{bmatrix}\T.
\end{align}
Rewriting the plant in terms of the augmented state vector yields
\begin{align}\label{eq:tutorial_augmented_system}
\begin{split}
\dot{\vect{\chi}} &=  \vect{A}_\chi \vect{\chi} + \vect{B}_\chi \vect{u} + \vect{D}_\chi \dot{\pert}\text{,}\\
\vect{y} &= \vect{C}_\chi\vect{\chi}\text{,}
\end{split}
\end{align}
where
\begin{align}\label{eq:tutorial_augmented_system_matrices}
\begin{split}
\vect{A}_\chi =
\begin{bmatrix}
\vect{A} & \vect{D} \\
\vect{0}\T & 0
\end{bmatrix}\text{,} \quad
\vect{B}_\chi =
\begin{bmatrix}
\vect{B} \\ \vect{0}\T
\end{bmatrix}\text{,} \quad
\vect{D} _\chi =
\begin{bmatrix}
\vect{0} \\ 1
\end{bmatrix}\text{,} \quad
\vect{C}_\chi =
\begin{bmatrix}
\vect{C} & \vect{0}
\end{bmatrix}.
\end{split}
\end{align}
Since the original system \eqref{eq:tutorial_original_system} is strongly observable, also the augmented system \eqref{eq:tutorial_augmented_system} with the unknown input $\dot{\pert}$ is strongly observable and, thus, the augmented system $\eqref{eq:tutorial_augmented_system}$ can be transformed into the proposed observer normal form. The application of the transformation algorithm yields the matrices of the corresponding system in the proposed observer normal form given by
\begin{align}\label{eq:tutorial_augmented_transformed_system_matrices}
\begin{split}
&\vect{\bar{A}}_\chi =
\left[\begin{array}{cccc|cccc}
-17.8381 & 1 & 0 & 0 & 91.7924  & 0 & 0 & 0 \\
-271.7324 & 0 & 1 & 0 & 593.4635 & 0 & 0 & 0 \\
-1220.785 & 0 & 0 & 1 & 0 & 0 & 0 & 0 \\
-19.8337 & 0 & 0 & 0 & 0 & 0 & 0 & 0 \\
\hline
2.006 & 0 & 0 & 0 & -6.4019 & 1 & 0 & 0 \\
11.1769 & 0 & 0 & 0 & -7.0093 & 0 & 1 & 0 \\
6 & 0 & 0 & 0 & 0 & 0 & 0 & 1 \\
1.328 & 0 & 0 & 0 & 0 & 0 & 0 & 0
\end{array}\right] \text{,} \quad
\vect{\bar{D}} _\chi =
\begin{bmatrix}
0 \\
0 \\
0 \\
-701.7 \\
\hline
0 \\
0 \\
0 \\
0
\end{bmatrix}\text{,} \\
&\vect{\bar{B}}_\chi =
\begin{bmatrix}
0 & 0 \\
0 & -286 \\
-701.7 & -8406.3602 \\
0 & 0 \\
\hline
0 & 0 \\
0 & 77.4405 \\
0 & 39.5191 \\
0 & 0
\end{bmatrix}\text{,} \quad
\vect{\bar{C}}_\chi =
\left[\begin{array}{cccc|cccc}
1 & 0 & 0 & 0 & 0 & 0 & 0 & 0 \\
0 & 0 & 0 & 0 & 1 & 0 & 0 & 0
\end{array}\right].
\end{split}
\end{align}
Furthermore, the amplitude of $\dot{\pert}$ is bounded by $L_{\dot{\pert}} = 0.046$ which again allows for the construction of a robust observer for the system in the proposed observer normal form according to Section \ref{ssec:robust_observer_design}. In the simulation, the observer parameters
\begin{align}\label{eq:tutorial_kappa_augmented}
\begin{split}
&\kappa_{1,3} = 9.9838\text{,}\quad \kappa_{1,2} = 46.0741\text{,}\quad \kappa_{1,1} = 112.7629\text{,}\quad\ \kappa_{1,0} = 134.9229\text{,}\\
&\kappa_{2,3} = 5.3348\text{,}\quad \kappa_{2,2} = 13.1556\text{,}\quad \kappa_{2,1} = 17.2047\text{,}\quad \kappa_{2,0} = 11\text{,}
\end{split}
\end{align}
are chosen according to \citep[Section 6.7]{Shtessel2013}. Again, the necessary condition ${\kappa_{1,0} > L_{\dot{\pert}} \abs{\bar{d}_{\chi,4,1}} = 32.2782}$ given in Theorem \ref{THEOREM:OBSERVER} is satisfied. Since $\pert$ acts as a matched disturbance, i.e., $\vect{D} = \vect{B}\vect{h}$ with $\vect{h} = \begin{bmatrix} 1 & 0 \end{bmatrix}\T$, its estimate $\hat{\pert}$ can be incorporated in the control law in order to compensate for the effects of the unknown input and achieve a robust stabilization of the state variables. A linear state-feedback controller given by
\begin{align}\label{eq:tutorial_state_feedback_augmented}
\vect{u} =
-
\begin{bmatrix}
\vect{K} & \vect{h}
\end{bmatrix}
\vect{\hat{\chi}}
\end{align}
is applied, where the state-feedback matrix $\vect{K}$ is designed for the nominal plant with $\pert \equiv 0$ and ${\vect{\hat{\chi}} = \begin{bmatrix} \vect{\hat{x}}\T & \hat{\pert} \end{bmatrix}\T}$ contains the estimates of the state variables and the unknown input.
The particular choice
\begin{align}\label{eq:tutorial_K_feedback}
\vect{K} =
\begin{bmatrix}
-0.0235 &  0.0036 & -0.2040 &  0.0356 & -0.1392 &  0.2471 & -0.0689 \\
-0.0258 & -0.0749 &  0.8998 & -0.6702 & -0.9164 & -0.7995 & -0.1821
\end{bmatrix}\text{,}
\end{align}
keeps the stable plant eigenvalues and transfers the unstable plant eigenvalues $s_1 = 0$ and $s_2 = 0.1219$ of $\vect{A}$ to $s'_1 = -1$ and $s'_2 = -2$ in the nominal closed-loop system with the dynamic matrix $\vect{A}-\vect{B}\vect{K}$.
The initial state vector of the plant is again selected as ${\vect{x}(0) =\begin{bmatrix} -0.5 & 0.1 & 0.02 & 0.2 & -0.1 & -0.3 & 0.2 \end{bmatrix}\T}$ and the initial observer state vector is set to ${\vect{\hat{\chi}}(0) = \vect{0}}$. Figure~\ref{fig:lateral_motion_aircraft_disturbance_reconstruction} shows the state variables, the estimation error variables and the estimate of the unknown input $\pert$ obtained from numerical simulations.

\begin{figure}[hb]
	\begin{center} \vspace{0.4cm}
        \includegraphics{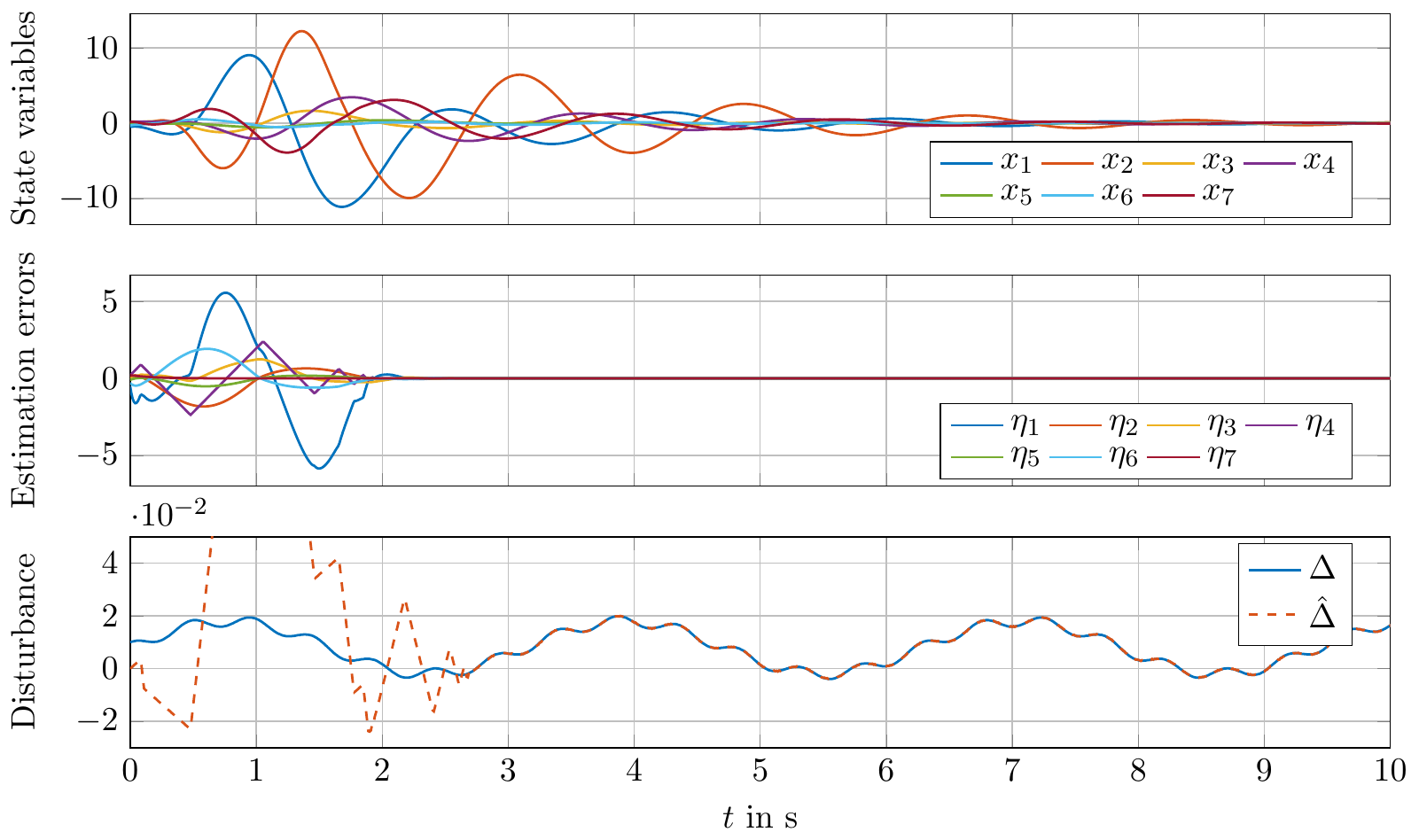}
		\caption{State variables $x_i$ of the plant, estimation errors $\eta_i = x_i - \hat{x}_i$ and estimate $\hat{\pert}$ of the unknown input $\pert$. The observer provides exact estimates of the plant states and the unknown input within a finite time and the states variables of the plant converge to zero asymptotically despite the unknown input.}\label{fig:lateral_motion_aircraft_disturbance_reconstruction} 
	\end{center}
\end{figure}

After a finite convergence time of approximately $2.5$ s the estimation errors vanish despite the unknown input $\pert$ and, thus, exact estimates of the state variables and the unknown input are obtained. Moreover, the state variables converge asymptotically to zero despite the unknown input once the observer has converged. Note that the state variables of the plant stay bounded, since the estimation errors stay bounded during convergence of the observer and the plant under consideration is linear and does not offer a finite escape time.

\section{Conclusion}\label{sec:Conclusion}

In this article, a novel observer normal form for strongly observable multivariable LTI systems with unknown inputs has been proposed. In contrast to classical approaches, this observer normal form explicitly considers the impact of the unknown inputs, which allows for a straight-forward construction of a RED-based observer. This observer provides theoretically exact estimates of the plant's state variables within finite time even in the presence of unknown bounded inputs. In contrast to many classical unknown input observers, the so-called observer matching condition is not prerequisite for the design of the observer. Moreover, the boundedness of the plant's state variables is also not required for global finite-time stability of the estimation error dynamics which allows for its application to unstable plants. In contrast to already existing sliding mode approaches %
which also do not require bounded state variables, the observer order and the number of tuning parameters are equal to the system order and are not unnecessarily increased.

The proposed method can be exploited for the reconstruction of unknown inputs with bounded derivatives which has been demonstrated by means of a simulation example. This approach in combination with a linear state-feedback controller allows for the robust stabilization of the plant in the presence of matched disturbances.

The proposed observer can be regarded as a multivariable generalization of the observer proposed in \citep{Niederwieser2019}. In the case of a single-output system the proposed observer normal form coincides with the well-known transposed observable canonical form for linear single-output systems. Furthermore, the proposed observer normal form allows to combine different observer concepts.

\section*{Acknowledgement}
The authors would like to thank Roland Falkensteiner for the constructive discussions and the important input during the coffee breaks.

\section*{Funding}

This work was supported by the European Union’s Horizon 2020 research and innovation programme under the Marie Skłodowska-Curie Grant agreement 734832.
The financial support by the Christian Doppler Research Association, Austria, the Austrian Federal Ministry for Digital and Economic Affairs and the National Foundation for Research, Technology and Development, Austria is gratefully acknowledged.

\appendix

\section{Systems with direct feed-through}\label{app:systems_with_direct_feedthrough}

In the following, the handling of systems with direct feed-through, which is already mentioned in Remark \ref{rem:system_with_direct_feedthrough}, is described in more detail.

\subsection{Reformulation as system without direct feed-through}\label{app:remove_direct_feedthrough}

In order to obtain an equivalent representation of system \eqref{eq:system_with_direct_feed_through} without direct feed-through, the algorithm proposed in \citep[Section 5.4, page 155]{Chen2004} is applied. Firstly, regular transformation matrices $\vect{U} \in \mathbb{R}^{p \times p}$ for the output and $\vect{V} \in \mathbb{R}^{m \times m}$ for the input are chosen such that
\begin{align}\label{eq:system_with_direct_feed_through_condition_on_transformation_matrices}
\begin{split}
\vect{U}\vect{F}\vect{V} = \begin{bmatrix}	\vect{I}_{m_\text{F}} & \vect{0} \\ \vect{0} & \vect{0} \end{bmatrix} \comma
\end{split}
\end{align}
where $m_\text{F} = \rank \vect{F}$.
Application of the output and input transformation
\begin{align}\label{eq:system_with_direct_feed_through_transformations_appendix}
\vect{\tilde{y}} = \vect{U}\vect{y} = \begin{bmatrix}	\vect{\tilde{y}}_0 \\ \vect{\tilde{y}}_1 \end{bmatrix}\comma \qquad
\vect{\tilde{\pert}} 	= \vect{V}^{-1}\vect{\pert} = \begin{bmatrix}	\vect{\tilde{\pert}}_0 \\ \vect{\tilde{\pert}}_1 \end{bmatrix}\comma
\end{align}
allows to represent system \eqref{eq:system_with_direct_feed_through} in terms of the new output $\vect{\tilde{y}}$ and the new input $\vect{\tilde{\pert}}$, i.e.,
\begin{subequations}\label{eq:system_with_direct_feed_through_transformed}
\begin{align}\label{eq:system_with_direct_feed_through_transformed_1}
&\dot{\vect{x}} = \vect{A}\vect{x} + \begin{bmatrix} \vect{\tilde{D}}_0 & \vect{\tilde{D}}_1  \end{bmatrix} \begin{bmatrix} \vect{\tilde{\pert}}_0 \\ \vect{\tilde{\pert}}_1  \end{bmatrix} \text{,}\\
\label{eq:system_with_direct_feed_through_transformed_2}
&\begin{bmatrix}	\vect{\tilde{y}}_0 \\ \vect{\tilde{y}}_1 \end{bmatrix} = \begin{bmatrix} \vect{\tilde{C}}_0 \\ \vect{\tilde{C}}_1  \end{bmatrix} \vect{x} + \begin{bmatrix}	\vect{I}_{m_\text{F}} & \vect{0} \\ \vect{0} & \vect{0} \end{bmatrix} \begin{bmatrix}	\vect{\tilde{\pert}}_0 \\ \vect{\tilde{\pert}}_1 \end{bmatrix}\comma
\end{align}
\end{subequations}
where $\vect{\tilde{D}}_0 \in \mathbb{R}^{n \times m_\text{F}}$, $\vect{\tilde{D}}_1 \in \mathbb{R}^{n \times (m-m_\text{F})}$, $\vect{\tilde{C}}_0 \in \mathbb{R}^{m_\text{F} \times n}$ and $\vect{\tilde{C}}_1 \in \mathbb{R}^{(m-m_\text{F}) \times n}$. The output equation \eqref{eq:system_with_direct_feed_through_transformed_2} is solved for
\begin{align}\label{eq:system_with_direct_feed_through_Delta_0}
\vect{\tilde{\Delta}}_0 = \vect{\tilde{y}}_0 - \vect{\tilde{C}}_0 \vect{x}
\end{align}
and substituted into \eqref{eq:system_with_direct_feed_through_transformed_1} which yields
\begin{align}\label{eq:system_with_direct_feed_through_Delta_0_inserted}
&\dot{\vect{x}} = \vect{A}\vect{x} + \vect{\tilde{D}}_0 \left(\vect{\tilde{y}}_0 - \vect{\tilde{C}}_0 \vect{x}\right) + \vect{\tilde{D}}_1 \vect{\tilde{\pert}}_1.
\end{align}
Combining \eqref{eq:system_with_direct_feed_through_Delta_0_inserted} and the remaining outputs $\vect{\tilde{y}}_1$ given in \eqref{eq:system_with_direct_feed_through_transformed_2} finally results in a system representation without direct feed-through
\begin{align}\label{eq:app_system_with_direct_feed_through_reduced}
\begin{split}
&\dot{\vect{x}} = \vect{\tilde{A}}\vect{x} + \vect{\tilde{D}}_1\vect{\tilde{\pert}}_1 + \vect{\tilde{D}}_0\vect{\tilde{y}}_0\text{,}\\
&\vect{\tilde{y}}_1 = \vect{\tilde{C}}_1\vect{x}\comma
\end{split}
\end{align}
where $\vect{\tilde{A}} = \vect{A} - \vect{\tilde{D}}_0\vect{\tilde{C}}_0$.

\subsection{Preservation of strong observability}\label{app:invariance_wrt_strong_observability}

Since the strong observability property is invariant w.r.t. to regular transformations of the output and the input, strong observability of system \eqref{eq:system_with_direct_feed_through_transformed} directly follows from the strong observability of the original system \eqref{eq:system_with_direct_feed_through}. In other words, the Rosenbrock matrix of system \eqref{eq:system_with_direct_feed_through_transformed} satisfies
\begin{align}\label{eq:strong_observability_condition_rosenbrock_after_transformations}
\rank
\begin{bmatrix}
s \vect{I}_n - \vect{A} & -\vect{\tilde{D}}_0 & -\vect{\tilde{D}}_1 \\
\vect{\tilde{C}}_0 & \vect{I}_{m_\text{F}} & \vect{0} \\
\vect{\tilde{C}}_1 & \vect{0} & \vect{0}
\end{bmatrix}
= n + m \qquad \forall s \in \mathbb{C}.
\end{align}
Furthermore, consider the Rosenbrock matrix
\begin{align}\label{eq:rosenbrock_tilde}
\vect{\tilde{P}}(s) =
\begin{bmatrix}
s \vect{I}_n - (\vect{A} - \vect{\tilde{D}}_0\vect{\tilde{C}}_0) & -\vect{\tilde{D}}_1 \\
\vect{\tilde{C}}_1 & \vect{0}
\end{bmatrix}
\end{align}
of the system \eqref{eq:app_system_with_direct_feed_through_reduced} with the known input $\vect{\tilde{y}}_0$.
It directly follows that
\begin{align}\label{eq:strong_observability_condition_rosenbrock_tilde}
\begin{split}
\rank\vect{\tilde{P}}(s) &=
\rank \begin{bmatrix}
s \vect{I}_n - (\vect{A} - \vect{\tilde{D}}_0\vect{\tilde{C}}_0) & -\vect{\tilde{D}}_1 \\
\vect{0} & \vect{0} \\
\vect{\tilde{C}}_1 & \vect{0}
\end{bmatrix}
=\\
&=
\rank
\begin{bmatrix}
s \vect{I}_n - \vect{A} & -\vect{\tilde{D}}_0 & -\vect{\tilde{D}}_1 \\
\vect{\tilde{C}}_0 & \vect{I}_{m_\text{F}} & \vect{0} \\
\vect{\tilde{C}}_1 & \vect{0} & \vect{0}
\end{bmatrix}
\begin{bmatrix}
\vect{I}_n & \vect{0} \\
-\vect{\tilde{C}}_0 &  \vect{0} \\
\vect{0} & \vect{I}_{m-m_\text{F}}
\end{bmatrix} = \\
&\hspace*{-0.2cm}\overset{\eqref{eq:strong_observability_condition_rosenbrock_after_transformations}}{=} \rank \begin{bmatrix}
\vect{I}_n & \vect{0} \\
-\vect{\tilde{C}}_0 &  \vect{0} \\
\vect{0} & \vect{I}_{m-m_\text{F}}
\end{bmatrix} =
n+ m - m_\text{F} \qquad \forall s \in \mathbb{C}\comma
\end{split}
\end{align}
and, thus, system \eqref{eq:app_system_with_direct_feed_through_reduced} is strongly observable \citep{HAUTUS1983353}.

\section{Proof of Lemma \ref{LEM:TRANSFORMATION_LEMMA}}\label{app:transformation_lemma}

In the following, the parts a) to f) of Lemma \ref{LEM:TRANSFORMATION_LEMMA} are proven.

\subsection{Proof of Lemma \ref{LEM:TRANSFORMATION_LEMMA}a)}\label{app:properties_of_auxiliary_system}

Since Step 1 relies on the steps SCB.1, SCB.2 and SCB.3 of the decomposition algorithm proposed in \citep[Section 5.3, pages 119-127]{Chen2004} this proof mainly refers to the results given in the cited book. Therein, a structural decomposition of a general LTI system of the form \eqref{eq:original_system} into four parts is presented. In the context of this article the strongly observable parts (labelled as b and d in \citep{Chen2004}) need to be considered. The regular output transformation \eqref{eq:output_transformation} and the regular state transformation {$\vect{\tilde{x}} = \begin{bmatrix} \vect{\tilde{x}}_1\T & \dots & \vect{\tilde{x}}_p\T \end{bmatrix}\T = \vect{Z}\vect{x}$} decompose the system into $p$ coupled chains of integrators of the form
\begin{align}\label{eq:integrator_chain_single_output}
\begin{split}
&\overset{\vect{.}}{\vect{\tilde{x}}}_j =
\begin{bmatrix}
0      & 1      & 0      & \dots  & 0      \\
\vdots & \ddots & \ddots & \ddots & \vdots \\
\vdots &        & \ddots & \ddots & 0      \\
0      & \dots  & \dots  & 0      & 1      \\
0      & \dots  & \dots  & \dots  & 0
\end{bmatrix}
\vect{\tilde{x}}_j
+
\begin{bmatrix}
    \vect{0}\T \\ \vdots \\ \vdots \\ \vect{0}\T \\ \vect{\tilde{a}}_j\T
\end{bmatrix}
\vect{\tilde{x}}
-
\vect{\tilde{\Xi}}_j\vect{y}
+
\begin{bmatrix}
\vect{0}\T \\ \vdots \\ \vdots \\ \vect{0}\T \\ \vect{\tilde{d}}_j\T
\end{bmatrix}
\vect{\pert}\comma\\
&\bar{y}_j =
\begin{bmatrix}
1 & 0 & \dots & 0
\end{bmatrix}
\vect{\tilde{x}}_j\comma
\end{split}
\end{align}
with one single output $\bar{y}_j$, some linear combination of $\vect{\pert}$ as unknown input and some couplings $\vect{\tilde{a}}_j\T\vect{\tilde{x}}$ in the last differential equation. In this representation, $\vect{\pert}$ either occurs explicitly only in the $\mu_j^\text{th}$ derivative of $\bar{y}_j$ or does not explicitly act on the $j\text{th}$ subsystem at all if $\vect{\tilde{d}}_j\T = \vect{0}\T$, i.e., under the output injection $\vect{\tilde{\Xi}}_j\vect{y}$ the relative degree of $\bar{y}_j$ w.r.t. $\vect{\pert}$, if it exists, is equal or larger than $\mu_j$. Thus, by applying the output injection $\vect{\Xi} \vect{y} = \vect{Z}^{-1}\begin{bmatrix} \vect{\tilde{\Xi}}_1\T & \dots & \vect{\tilde{\Xi}}_p\T \end{bmatrix}\T \vect{y}$ and the output transformation \eqref{eq:output_transformation} to the original system \eqref{eq:original_system} one exactly obtains the auxiliary system \eqref{eq:auxiliary_system_step_1} satisfying the conditions \eqref{eq:rank_of_reduced_observability_matrix} and \eqref{eq:relative_degree_condition}.

\subsection{Proof of Lemma \ref{LEM:TRANSFORMATION_LEMMA}b)}\label{app1.1a}

In order to show that the transformation matrix $\vect{T}$ given in \eqref{eq:transformation_matrix} is non-singular, a procedure similar to the one in \citep{Luenberger1967} is applied. Consider constants $\xi_{j,i}$ such that the linear combination of the column vectors of the transformation matrix $\vect{T}$ equals the zero vector, i.e.,
\begin{align}\label{eq:linear_combination_t_i_1}
\sum\limits_{j=1}^p \sum_{i=0}^{\mu_j-1} \xi_{j,i} \vect{t}_{\mu_1+ \dots + \mu_j - i} = \vect{0}.
\end{align}
Substituting $\vect{t}_{\mu_1+ \dots + \mu_j - i}$ according to \eqref{eq:t_i} and multiplication of both sides with $\vect{c}\T_{\bar{j}} \vect{\check{A}}^{\bar{i}}$ from the left-hand side yields
\begin{align}\label{eq:linear_combination_t_i_2}
\sum\limits_{j=1}^p \sum_{i=0}^{\mu_j-1} \xi_{j,i} \left(\vect{\check{c}}\T_{\bar{j}}\vect{\check{A}}^{i+\bar{i}}\vect{t}_{\mu_1 + \dots + \mu_j} - \sum\limits_{r = j+1}^{p}\sum\limits_{q=1}^{i} \beta_{j,r,\mu_j - q} \vect{\check{c}}\T_{\bar{j}}\vect{\check{A}}^{i-q+\bar{i}}\vect{t}_{\mu_1+ \dots +\mu_r}\right) = 0.
\end{align}
Due to the choice of the vectors $\vect{t}_{\mu_1},\vect{t}_{\mu_1 + \mu_2},\dots,\vect{t}_n$ in \eqref{eq:t_mu_j_major}, they satisfy
\begin{align}\label{eq:linear_combination_t_i_3}
\begin{bmatrix}
\vect{\check{c}}_1\T \\
\vdots \\
\vect{\check{c}}_1\T \vect{\check{A}}^{\mu_1 - 1} \\
\vect{\check{c}}_2\T \\
\vdots \\
\vect{\check{c}}_2\T \vect{\check{A}}^{\mu_2 - 1} \\
\vdots \\
\vect{\check{c}}_p\T \\
\vdots \\
\vect{\check{c}}_p\T \vect{\check{A}}^{\mu_p - 1} \\
\end{bmatrix}
\begin{bmatrix}
\vect{t}_{\mu_1} & \vect{t}_{\mu_1 + \mu_2} & \dots & \vect{t}_{n}
\end{bmatrix}
=
\begin{bmatrix}
0      & 0      & \dots & 0 \\
\vdots & \vdots &       & \vdots \\
1      & 0      & \dots & 0 \\
0      & 0      & \dots & 0 \\
\vdots & \vdots &       & \vdots \\
0      & 1      & \dots & 0 \\
\vdots & \vdots &       & \vdots \\
0      & 0      & \dots & 0 \\
\vdots & \vdots &       & \vdots \\
0      & 0      & \dots & 1 \\
\end{bmatrix}.
\end{align}
Consider equation \eqref{eq:linear_combination_t_i_2} for $\bar{i} = 0$. Taking into account \eqref{eq:linear_combination_t_i_3} reduces \eqref{eq:linear_combination_t_i_2} to
\begin{align}\label{eq:linear_combination_t_i_4}
\xi_{\bar{j},\mu_{\bar{j}}-1} = 0 \qquad \text{for} \quad \bar{j} = 1,2,\dots,p.
\end{align}
Furthermore, consider equation \eqref{eq:linear_combination_t_i_2} for $\bar{i} = 1$. Taking into account \eqref{eq:linear_combination_t_i_3} and \eqref{eq:linear_combination_t_i_4} reduces \eqref{eq:linear_combination_t_i_2} to
\begin{align}\label{eq:linear_combination_t_i_5}
\xi_{\bar{j},\mu_{\bar{j}}-2} = 0 \qquad \text{for} \quad \bar{j} = 1,2,\dots,p.
\end{align}
Continuing in this way, $\xi_{j,i} = 0$ can be shown for $j = 1,2,\dots,p$, $i = 0,1,\dots,\mu_j - 1$. Hence, it can be concluded that the columns of the transformation matrix $\vect{T}$ are linearly independent and, thus, $\vect{T}$ is non-singular. Note that this proof does not require any restrictions regarding the coefficients $\beta_{j,k,l}$ and, thus, $\vect{T}$ is non-singular regardless of the values of $\beta_{j,k,l}$.%

\subsection{Proof of Lemma \ref{LEM:TRANSFORMATION_LEMMA}c)}\label{app1.1d}

The existence of a unique solution $\vect{\beta}^{(j)}$ of the system of equations \eqref{eq:beta_equation_sys} is ensured, if and only if the matrices $\vect{\Abeta}^{(j)}$ are non-singular in any case for $j = 1,\dots,p-1$. Consider the determinant
\begin{align}
\begin{split}
\det \vect{\Abeta}^{(j)} \overset{\eqref{eq:H_beta_j}}{=}
\begin{vmatrix}
\Abeta^{(j)}_{j+1,j+1} & \cdots & \Abeta^{(j)}_{j+1,p-1} & \Abeta^{(j)}_{j+1,p} \\
\vdots & \ddots & \vdots & \vdots \\
\Abeta^{(j)}_{p-1,j+1} & \cdots & \Abeta^{(j)}_{p-1,p-1} & \Abeta^{(j)}_{p-1,p} \\
\Abeta^{(j)}_{p,j+1} & \cdots & \Abeta^{(j)}_{p,p-1} & \Abeta^{(j)}_{p,p} \\
\end{vmatrix}.
\label{eq:proof_Abeta_1}
\end{split}
\end{align}
Since the block $\vect{\Abeta}^{(j)}_{p,p}$ of size $(\mu_j - \mu_p) \times (\mu_j - \mu_p)$ is an upper unitriangular matrix, see \eqref{eq:H_beta_j_k_l}, its determinant satisfies
\begin{align}
\begin{split}
\det\Abeta^{(j)}_{p,p} = 1.
\label{eq:proof_Abeta_3}
\end{split}
\end{align}
Furthermore, its inverse always exists and is again upper unitriangular, i.e.,
\begin{align}
\begin{split}
{\Abeta^{(j)}_{p,p}}^{-1} =
\begin{bmatrix}
1 & * & \cdots & * \\
0 & \ddots & \ddots & \vdots \\
\vdots & \ddots &\ddots & * \\
0 & \cdots & 0 & 1
\end{bmatrix}
\text{,}
\label{eq:proof_Abeta_1a}
\end{split}
\end{align}
which allows to rewrite the determinant \eqref{eq:proof_Abeta_1} in terms of the Schur complement of $\vect{\Abeta}^{(j)}_{p,p}$, i.e.,
\begin{align}
\begin{split}
\det \vect{\Abeta}^{(j)} &=
\underbrace{\det\Abeta^{(j)}_{p,p}}_{\overset{\eqref{eq:proof_Abeta_3}}{=} 1} \cdot 
\begin{vmatrix}
\begin{bmatrix}
\Abeta^{(j)}_{j+1,j+1} & \cdots & \Abeta^{(j)}_{j+1,p-1} \\
\vdots & \ddots & \vdots \\
\Abeta^{(j)}_{p-1,j+1} & \cdots & \Abeta^{(j)}_{p-1,p-1}
\end{bmatrix}
-
\begin{bmatrix}
\Abeta^{(j)}_{j+1,p} \\
\vdots \\
\Abeta^{(j)}_{p-1,p}
\end{bmatrix}
{\Abeta^{(j)}_{p,p}}^{-1}
\begin{bmatrix}
\Abeta^{(j)}_{p,j+1} & \cdots & \Abeta^{(j)}_{p,p-1}
\end{bmatrix}
\end{vmatrix} = \\
&=
\setlength\arraycolsep{10pt}
\renewcommand*{\arraystretch}{1.2} 
\begin{vmatrix}
\Abeta^{(j)}_{j+1,j+1} - \Abeta^{(j)}_{j+1,p}{\Abeta^{(j)}_{p,p}}^{-1}\Abeta^{(j)}_{p,j+1} & \cdots & \Abeta^{(j)}_{j+1,p-1} - \Abeta^{(j)}_{j+1,p}{\Abeta^{(j)}_{p,p}}^{-1}\Abeta^{(j)}_{p,p-1}\\
\vdots & \ddots & \vdots \\
\Abeta^{(j)}_{p-1,j+1} - \Abeta^{(j)}_{p-1,p}{\Abeta^{(j)}_{p,p}}^{-1}\Abeta^{(j)}_{p,j+1} & \cdots & \Abeta^{(j)}_{p-1,p-1} - \Abeta^{(j)}_{p-1,p}{\Abeta^{(j)}_{p,p}}^{-1}\Abeta^{(j)}_{p,p-1}
\end{vmatrix}.
\label{eq:proof_Abeta_2}
\end{split}
\end{align}
Furthermore, due to the special structure of the blocks $\Abeta^{(j)}_{r,p}$, $\Abeta^{(j)}_{p,s}$ given in \eqref{eq:H_beta_j_k_l} and ${\Abeta^{(j)}_{p,p}}^{-1}$ indicated in \eqref{eq:proof_Abeta_1a}, the blocks
\begin{align}
\Abeta^{(j,1)}_{r,s} = \Abeta^{(j)}_{r,s} - \Abeta^{(j)}_{r,p}{\Abeta^{(j)}_{p,p}}^{-1}\Abeta^{(j)}_{p,s} \qquad \text{for} \quad r = j+1,\dots,p-1\text{,}\quad s = j+1,\dots,p-1\text{,}
\label{eq:proof_Abeta_2b}
\end{align}
of the matrix in \eqref{eq:proof_Abeta_2} offer exactly the same structure as the original blocks $\Abeta^{(j)}_{r,s}$, i.e., the zero elements and also the ones on the main diagonal for the case $r = s$ of the original matrix $\Abeta^{(j)}$ are preserved. Thus, it is again possible to rewrite the determinant \eqref{eq:proof_Abeta_3} in terms of the Schur complement of $\Abeta^{(j,1)}_{p-1,p-1}$ which is again upper unitriangular, i.e.,
\begin{align}
\begin{split}
\det \vect{\Abeta}^{(j)} &=
\resizebox{.86\width}{!}{$
\underbrace{\det\Abeta^{(j,1)}_{p-1,p-1}}_{= 1} \cdot 
\begin{vmatrix}
\begin{bmatrix}
\Abeta^{(j,1)}_{j+1,j+1} & \cdots & \Abeta^{(j,1)}_{j+1,p-2} \\
\vdots & \ddots & \vdots \\
\Abeta^{(j,1)}_{p-2,j+1} & \cdots & \Abeta^{(j,1)}_{p-2,p-2}
\end{bmatrix}
-
\begin{bmatrix}
\Abeta^{(j,1)}_{j+1,p-1} \\
\vdots \\
\Abeta^{(j,1)}_{p-2,p-1}
\end{bmatrix}
{\Abeta^{(j,1)}_{p-1,p-1}}^{-1}
\begin{bmatrix}
\Abeta^{(j,1)}_{p-1,j+1} & \cdots & \Abeta^{(j,1)}_{p-1,p-2}
\end{bmatrix}
\end{vmatrix} =$} \\
&=
\setlength\arraycolsep{10pt}
\renewcommand*{\arraystretch}{1.2} 
\resizebox{.86\width}{!}{$\begin{vmatrix}
\Abeta^{(j,1)}_{j+1,j+1} - \Abeta^{(j,1)}_{j+1,p-1}{\Abeta^{(j,1)}_{p-1,p-1}}^{-1}\Abeta^{(j,1)}_{p-1,j+1} & \cdots & \Abeta^{(j,1)}_{j+1,p-2} - \Abeta^{(j,1)}_{j+1,p-1}{\Abeta^{(j,1)}_{p-1,p-1}}^{-1}\Abeta^{(j,1)}_{p-1,p-2}\\
\vdots & \ddots & \vdots \\
\Abeta^{(j,1)}_{p-2,j+1} - \Abeta^{(j,1)}_{p-2,p-1}{\Abeta^{(j,1)}_{p-1,p-1}}^{-1}\Abeta^{(j,1)}_{p-1,j+1} & \cdots & \Abeta^{(j,1)}_{p-2,p-2} - \Abeta^{(j,1)}_{p-2,p-1}{\Abeta^{(j,1)}_{p-1,p-1}}^{-1}\Abeta^{(j,1)}_{p-1,p-2}
\end{vmatrix}$}
.
\label{eq:proof_Abeta_4}
\end{split}
\end{align}
Again, the blocks
\begin{align}
\Abeta^{(j,2)}_{r,s} = \Abeta^{(j,1)}_{r,s} - \Abeta^{(j,1)}_{r,p-2}{\Abeta^{(j,1)}_{p-1,p-1}}^{-1}\Abeta^{(j,1)}_{p-2,s} \qquad \text{for} \quad r = j+1,\dots,p-2\text{,}\quad s = j+1,\dots,p-2\text{,}
\label{eq:proof_Abeta_5}
\end{align}
of the matrix in \eqref{eq:proof_Abeta_4} offer the same structure as $\Abeta^{(j,1)}_{r,s}$ and, thus, also as the original blocks $\Abeta^{(j)}_{r,s}$, which again allows to rewrite the determinant in \eqref{eq:proof_Abeta_4} in terms of the Schur complement of the upper unitriangular matrix $\Abeta^{(j,2)}_{p-2,p-2}$ and so on. Continuing this way until
\begin{align}
\begin{split}
\det \vect{\Abeta}^{(j)} &=
\underbrace{\det\Abeta^{(j,p-j-2)}_{j+2,j+2}}_{= 1} \cdot 
\begin{vmatrix}
\Abeta^{(j,p-j-2)}_{j+1,j+1}
-
\Abeta^{(j,p-j-2)}_{j+1,j+2}
{\Abeta^{(j,p-j-2)}_{j+2,j+2}}^{-1}
\Abeta^{(j,p-j-2)}_{j+2,j+1}
\end{vmatrix} = \\
&=
\det \Abeta^{(j,p-j-1)}_{j+1,j+1} = 1 \neq 0 \text{,}
\label{eq:proof_Abeta_6}
\end{split}
\end{align}
it finally follows that $\vect{\Abeta}^{(j)}$ is non-singular and, thus, there always exists a unique solution of the system of equations \eqref{eq:beta_equation_sys}.

\subsection{Proof of Lemma \ref{LEM:TRANSFORMATION_LEMMA}d)}\label{app1.1b}

Multiplication of \eqref{eq:matrices_of_transformed_system_A} with $\vect{T}$ from the left-hand side and substitution of $\vect{A}$ by the decomposition \eqref{eq:decompose_dynamic_matrix} yields
\begin{align}\label{eq:conditions_on_T_A_1}
\vect{T}\vect{\bar{A}} = \vect{\check{A}}\vect{T} - \vect{\Xi}\vect{C}\vect{T}.
\end{align}
The results of Lemma \ref{LEM:TRANSFORMATION_LEMMA}f) and the insertion of the identity matrix $\vect{\Gamma}^{-1}\vect{\Gamma}$ allow to simplify the term $\vect{\Xi}\vect{C}\vect{T}$ to
\begin{align}\label{eq:simplify_Xi_times_C}
\begin{split}
\vect{\Xi}\vect{C}\vect{T} &= \vect{\Xi}\vect{\Gamma}^{-1}\underbrace{\vect{\Gamma}\vect{C}\vect{T}}_{\overset{\eqref{eq:matrices_of_transformed_system_C}}{=} \vect{\bar{C}}} \overset{\eqref{eq:transformed_output_matrix}}{=}
\vect{\Xi}\vect{\Gamma}^{-1}
\begin{bmatrix}
\vect{e}_1\T \\
\vect{e}_{\mu_1 + 1}\T \\
\vdots \\
\vect{e}_{\mu_1 + \dots + \mu_{p-1}+ 1}\T
\end{bmatrix}
= \\[10pt]
&=
\left [ \begin{array}{cc|cc|c|cc}
\vect{\tilde{\alpha}}_1 & \vect{0}_{n \times (\mu_1 -1)} & \vect{\tilde{\alpha}}_2 & \vect{0}_{n \times (\mu_2 -1)} & \dots & \vect{\tilde{\alpha}}_p & \vect{0}_{n \times (\mu_p -1)}
\end{array} \right ]
\comma
\end{split}
\end{align}
where $\vect{\tilde{\alpha}}_j \in \mathbb{R}^{n}$, $j = 1,\dots,p$.
Using \eqref{eq:simplify_Xi_times_C} and the definitions of $\vect{\bar{A}}$ in \eqref{eq:transformed_dynamic_matrix} and $\vect{T}$ in \eqref{eq:transformation_matrix} allows to express equation \eqref{eq:conditions_on_T_A_1} in terms of its column vectors, i.e.,
\begin{align}\label{eq:conditions_on_T_A}
\begin{split}
\bigg [
&\resizebox{.83\width}{!}{$
	\begin{array}{cccc|}
	\vect{T} \vect{\alpha}_1 & \vect{t}_1 + \sum\limits_{r=2}^p\beta_{1,r,1} \vect{t}_{\mu_1+ \dots +\mu_r} & \dots & \vect{t}_{\mu_1-1} + \sum\limits_{r=2}^p\beta_{1,r,\mu_1-1} \vect{t}_{\mu_1+ \dots +\mu_r} \end{array} $}\\
&\resizebox{.83\width}{!}{$
	\begin{array}{cccc|c|ccccc}
	\vect{T} \vect{\alpha}_2 & \vect{t}_{\mu_1+1} + \sum\limits_{r=3}^p\beta_{2,r,1} \vect{t}_{\mu_1+ \dots +\mu_r} & \dots & \vect{t}_{\mu_1 + \mu_2 - 1} + \sum\limits_{r=3}^p\beta_{2,r,\mu_2-1} \vect{t}_{\mu_1+ \dots +\mu_r} &
	\dots &
	\vect{T}\vect{\alpha}_p & \vect{t}_{\mu_1+ \dots + \mu_{p-1} + 1} & \dots & \vect{t}_{n-1}
	\end{array}$}
\bigg ]	= \\
=
\bigg [
&\resizebox{.83\width}{!}{$\begin{array}{cccc|}
	\vect{\check{A}} \vect{t}_1 - \vect{\tilde{\alpha}}_1 & \vect{\check{A}} \vect{t}_2 & \dots & \vect{\check{A}} \vect{t}_{\mu_1}
	\end{array}
$} \\
&\resizebox{.83\width}{!}{$
	\begin{array}{cccc|c|ccccc}
	\vect{\check{A}} \vect{t}_{\mu_1 +1} - \vect{\tilde{\alpha}}_2 & \vect{\check{A}} \vect{t}_{\mu_1 +2} & \dots & \vect{\check{A}} \vect{t}_{\mu_1 + \mu_2} &
	\dots & 
	\vect{\check{A}} \vect{t}_{\mu_1+ \dots + \mu_{p-1} + 1} - \vect{\tilde{\alpha}}_p & \vect{\check{A}} \vect{t}_{\mu_1+ \dots + \mu_{p-1} + 2} & \dots & \vect{\check{A}}\vect{t}_{n}
	\end{array}  $} \bigg ]\text{,}
\end{split}
\end{align}
where $\vect{\alpha}_j = \begin{bmatrix} \alpha_{j,1} & \dots & \alpha_{j,n} \end{bmatrix}\T$, $j = 1,2,\dots,p$.
It follows from \eqref{eq:conditions_on_T_A} that the matrix $\vect{\bar{A}}$ offers the desired structure \eqref{eq:transformed_dynamic_matrix} if
\begin{align}\label{eq:conditions_on_T_A_i}
\begin{split}
\vect{t}_{\mu_1+ \dots +\mu_j - i} + \sum\limits_{r = j+1}^p \beta_{j,r,\mu_j-i} \vect{t}_{\mu_1+ \dots + \mu_r} = \vect{\check{A}} \vect{t}_{\mu_1+ \dots +\mu_j - i +1}
\end{split}
\end{align}
for $j = 1,2,\dots,p$, $i = 1,2,\dots, \mu_{j} - 1$, is satisfied. This is shown by finally substituting $\vect{t}_{\mu_1+ \dots +\mu_j - i}$ and $\vect{t}_{\mu_1+ \dots +\mu_j - i +1}$ into \eqref{eq:conditions_on_T_A_i} according to \eqref{eq:t_i}, i.e.,
\begin{align}\label{eq:conditions_on_T_A_i_2}
\begin{split}
&\vect{\check{A}}^{i}\vect{t}_{\mu_1 + \dots + \mu_j} - \sum\limits_{r = j+1}^{p}\sum\limits_{q=1}^{i} \beta_{j,r,\mu_j - q} \vect{\check{A}}^{i-q}\vect{t}_{\mu_1+ \dots +\mu_r} + \sum\limits_{r = j+1}^p \beta_{j,r,\mu_j-i} \vect{t}_{\mu_1+ \dots + \mu_r}
= \\
=&
\vect{\check{A}} \left(\vect{\check{A}}^{i-1}\vect{t}_{\mu_1 + \dots + \mu_j} - \sum\limits_{r = j+1}^{p}\sum\limits_{q=1}^{i-1} \beta_{j,r,\mu_j - q} \vect{\check{A}}^{i-q-1}\vect{t}_{\mu_1+ \dots +\mu_r}\right)\text{,}%
\end{split}
\end{align}
which is true for all $j = 1,2,\dots,p$, $i = 1,\dots,\mu_j - 1$.

\subsection{Proof of Lemma \ref{LEM:TRANSFORMATION_LEMMA}e)}\label{app1.1b1}

Insertion of the identity matrix $\obsvred^{-1}\obsvred$ into the transformation rule \eqref{eq:matrices_of_transformed_system_D} for the unknown-input matrix $\vect{\bar{D}}$ of the transformed system yields
\begin{align}\label{eq:proof_Dbar_1}
\begin{split}
\vect{\bar{D}} &\overset{\eqref{eq:matrices_of_transformed_system_D}}{=} \vect{T}^{-1}\vect{D} = \vect{T}^{-1}\obsvred^{-1} \obsvred \vect{D}.
\end{split}
\end{align}
Note that $\obsvred$ is invertible which directly follows from Lemma \ref{LEM:TRANSFORMATION_LEMMA}a).
Condition \eqref{eq:relative_degree_condition} allows to express the product $\obsvred \vect{D}$ as
\begin{align}\label{eq:proof_Dbar_3}
\begin{split}
\obsvred\vect{D} &\overset{\eqref{eq:reduced_observability_matrix}}{=}
\begin{bmatrix}
\vect{\check{c}}_1\T \vect{D}\\
\vdots \\
\vect{\check{c}}_1\T \vect{\check{A}}^{\mu_1 - 2} \vect{D}\\
\vect{\check{c}}_1\T \vect{\check{A}}^{\mu_1 - 1} \vect{D}\\
\hline
\vdots \\
\hline
\vect{\check{c}}_p\T \vect{D}\\
\vdots \\
\vect{\check{c}}_p\T \vect{\check{A}}^{\mu_p - 2} \vect{D}\\
\vect{\check{c}}_p\T \vect{\check{A}}^{\mu_p - 1} \vect{D}
\end{bmatrix}
\overset{\eqref{eq:relative_degree_condition}}{=}
\begin{bmatrix}
\vect{0}\T\\
\vdots \\
\vect{0}\T\\
\vect{\check{c}}_1\T \vect{\check{A}}^{\mu_1 - 1} \vect{D}\\
\hline
\vdots \\
\hline
\vect{0}\T \\
\vdots \\
\vect{0}\T\\
\vect{\check{c}}_p\T \vect{\check{A}}^{\mu_p - 1} \vect{D}
\end{bmatrix} =
\sum\limits_{j=1}^{p} \vect{e}_{\mu_1+\dots+\mu_j} \vect{\check{c}}_j\T \vect{\check{A}}^{\mu_j - 1} \vect{D}
.
\end{split}
\end{align}
Substituting \eqref{eq:proof_Dbar_3} into \eqref{eq:proof_Dbar_1} and taking into account \eqref{eq:t_mu_j_major} yields
\begin{align}\label{eq:proof_Dbar_4}
\begin{split}
\vect{\bar{D}} &\overset{\eqref{eq:proof_Dbar_3}}{=} \sum\limits_{j=1}^{p} \vect{T}^{-1}\obsvred^{-1} \vect{e}_{\mu_1+\dots+\mu_j} \vect{\check{c}}_j\T \vect{\check{A}}^{\mu_j - 1} \vect{D} = \\ &\overset{\eqref{eq:t_mu_j_major}}{=} \sum\limits_{j=1}^{p} \vect{T}^{-1} \vect{t}_{\mu_1+\dots+\mu_j} \vect{\check{c}}_j\T \vect{\check{A}}^{\mu_j - 1} \vect{D}.
\end{split}
\end{align}
The vector $\vect{t}_{\mu_1+\dots+\mu_j}$ corresponds to the $(\mu_1+\dots+\mu_j)^\text{th}$ column of $\vect{T}$, i.e., $\vect{t}_{\mu_1+\dots+\mu_j} = \vect{T} \vect{e}_{\mu_1+\dots+\mu_j}$, which allows to simplify \eqref{eq:proof_Dbar_4} to
\begin{align}\label{eq:proof_Dbar_5}
\begin{split}
\vect{\bar{D}} &= \sum\limits_{j=1}^{p} \vect{T}^{-1} \vect{T} \vect{e}_{\mu_1+\dots+\mu_j} \vect{\check{c}}_j\T \vect{\check{A}}^{\mu_j - 1} \vect{D} = \sum\limits_{j=1}^{p} \vect{e}_{\mu_1+\dots+\mu_j} \vect{\check{c}}_j\T \vect{\check{A}}^{\mu_j - 1} \vect{D} =
\begin{bmatrix}
\vect{0}\T\\
\vdots \\
\vect{0}\T\\
\vect{\check{c}}_1\T \vect{\check{A}}^{\mu_1 - 1} \vect{D}\\
\hline
\vdots \\
\hline
\vect{0}\T \\
\vdots \\
\vect{0}\T\\
\vect{\check{c}}_p\T \vect{\check{A}}^{\mu_p - 1} \vect{D}
\end{bmatrix}\text{,}
\end{split}
\end{align}
which finally completes the proof. %

\subsection{Proof of Lemma \ref{LEM:TRANSFORMATION_LEMMA}f)}\label{app1.1c}

The output matrix $\vect{\bar{C}}$ of the transformed system can be expressed by its elements, i.e.,
\begin{align}
\begin{split}
\vect{\bar{C}} &\overset{\eqref{eq:matrices_of_transformed_system_C}}{=} \vect{\Gamma}\vect{C}\vect{T} \underset{\eqref{eq:transformation_matrix}}{\overset{\eqref{eq:auxiliary_system_output_step_1}}{=}}
\begin{bmatrix}
\vect{\check{c}}_1\T \\
\vdots \\
\vect{\check{c}}_p\T
\end{bmatrix}
\begin{bmatrix}
\vect{t}_1 & \dots & \vect{t}_n
\end{bmatrix}
=
\begin{bmatrix}
\vect{\check{c}}_1\T\vect{t}_1 & \dots & \vect{\check{c}}_1\T\vect{t}_n \\
\vdots &  & \vdots \\
\vect{\check{c}}_p\T\vect{t}_1 & \dots & \vect{\check{c}}_p\T\vect{t}_n \\
\end{bmatrix}
\text{,}
\label{eq:proof_Cbar_1}
\end{split}
\end{align}
where the single elements can be described as
\begin{align}
\begin{split}
\vect{\check{c}}_{\bar{j}}\T \vect{t}_{\mu_1+ \dots + \mu_j - i} \overset{\eqref{eq:t_i}}{=} \vect{\check{c}}_{\bar{j}}\T \vect{\check{A}}^{i}\vect{t}_{\mu_1 + \dots + \mu_j} - \sum\limits_{r = j+1}^{p}\sum\limits_{q=1}^{i} \beta_{j,r,\mu_j - q} \vect{\check{c}}_{\bar{j}}\T\vect{\check{A}}^{i-q}\vect{t}_{\mu_1+ \dots +\mu_r}
\label{eq:proof_Cbar_2}
\end{split}
\end{align}
for $\bar{j} = 1,\dots,p$, $j = 1,\dots,p$ and $i = 0,\dots,\mu_j -1$.
In the following, the scalar product $\vect{\check{c}}_{\bar{j}}\T \vect{t}_{\mu_1+ \dots + \mu_j - i}$ given in \eqref{eq:proof_Cbar_2} is considered for the cases (a) $\bar{j} < j$, (b) $\bar{j} = j$ and (c) $\bar{j} > j$:
\begin{enumerate}[label=(\alph*)]
	\item For $\bar{j} < j$, all the terms on the right-hand side of \eqref{eq:proof_Cbar_2} vanish which follows directly from \eqref{eq:linear_combination_t_i_3}, i.e.,
	\begin{align}
	\begin{split}
	\vect{\check{c}}_{\bar{j}}\T \vect{t}_{\mu_1+ \dots + \mu_j - i} =
	\underbrace{\vect{\check{c}}_{\bar{j}}\T \vect{\check{A}}^{i}\vect{t}_{\mu_1 + \dots + \mu_j}}_{\overset{\eqref{eq:linear_combination_t_i_3}}{=} 0} - \sum\limits_{r = j+1}^{p}\sum\limits_{q=1}^{i} \beta_{j,r,\mu_j - q} \underbrace{\vect{\check{c}}_{\bar{j}}\T\vect{\check{A}}^{i-q}\vect{t}_{\mu_1+ \dots +\mu_r}}_{\overset{\eqref{eq:linear_combination_t_i_3}}{=} 0} = 0
	\label{eq:proof_Cbar_a}
	\end{split}
	\end{align}
	for $\bar{j} = 1,\dots,p-1$, $j = \bar{j}+1,\dots,p$ and $i = 0,\dots,\mu_j -1$.
	\item For $\bar{j} = j$, \eqref{eq:proof_Cbar_2} simplifies to
	\begin{align}
	\begin{split}
	\vect{\check{c}}_j\T \vect{t}_{\mu_1+ \dots + \mu_j - i} &= \vect{\check{c}}_j\T \vect{\check{A}}^{i}\vect{t}_{\mu_1 + \dots + \mu_j} - \sum\limits_{r = j+1}^{p}\sum\limits_{q=1}^{i} \beta_{j,r,\mu_j - q} \underbrace{\vect{\check{c}}_j\T\vect{\check{A}}^{i-q}\vect{t}_{\mu_1+ \dots +\mu_r}}_{\overset{\eqref{eq:linear_combination_t_i_3}}{=} 0} = \\
	&= \vect{\check{c}}_j\T \vect{\check{A}}^{i}\vect{t}_{\mu_1 + \dots + \mu_j} \overset{\eqref{eq:linear_combination_t_i_3}}{=}
	\begin{cases}
	1 \qquad & \text{if } i = \mu_j - 1 \\
	0 \qquad & \text{else}
	\end{cases}
	\label{eq:proof_Cbar_b}
	\end{split}
	\end{align}
	for $j = 1,\dots,p$, $i = 0,\dots,\mu_j -1$.
	\item In the case $ \bar{j} > j$, only the coefficients $\beta_{j,r,\mu_j - q}$ whose indices satisfy $\mu_j-q \geq \mu_r$ are nonzero, see \eqref{eq:beta_j_0}, and make a contribution to the sum in \eqref{eq:proof_Cbar_2} which allows to change the upper limit of the summation w.r.t. the index $q$ as
	\begin{align}
	\begin{split}
	\vect{\check{c}}_{\bar{j}}\T \vect{t}_{\mu_1+ \dots + \mu_j - i} = \vect{\check{c}}_{\bar{j}}\T \vect{\check{A}}^{i}\vect{t}_{\mu_1 + \dots + \mu_j} - \sum\limits_{r = j+1}^{p}\sum\limits_{q=1}^{\mu_j - \mu_r} \beta_{j,r,\mu_j - q} \vect{\check{c}}_{\bar{j}}\T\vect{\check{A}}^{i-q}\vect{t}_{\mu_1+ \dots +\mu_r}
	\label{eq:proof_Cbar_c1}
	\end{split}
	\end{align}
	for $\bar{j} = 2,\dots,p$, $j = 1,\dots,\bar{j}-1$, $i = 0,\dots,\mu_j - 1$.
	Since the term $\sum\limits_{r = j+1}^{p}\sum\limits_{q=1}^{\mu_j - \mu_r} \beta_{j,r,\mu_j - q} \vect{\check{c}}_{\bar{j}}\T\vect{\check{A}}^{i-q}\vect{t}_{\mu_1+ \dots +\mu_r}$ reassembles the left-hand side of the ${\big((\bar{j}-j+1)\mu_j - (\mu_{j+1} + \dots + \mu_{\bar{j}}) - i\big)^{\text{th}}}$ equation \eqref{eq:beta_equation_sys}, it equals the corresponding right-hand side of \eqref{eq:beta_equation_sys}, i.e.,
	\begin{align}
	\begin{split}
	\sum\limits_{r = j+1}^{p}\sum\limits_{q=1}^{\mu_j - \mu_r} \beta_{j,r,\mu_j - q} \vect{\check{c}}_{\bar{j}}\T\vect{\check{A}}^{i-q}\vect{t}_{\mu_1+ \dots +\mu_r} = \vect{\check{c}}_{\bar{j}}\T \vect{\check{A}}^{i}\vect{t}_{\mu_1 + \dots + \mu_j}
	\label{eq:proof_Cbar_c2}
	\end{split}
	\end{align}
	for $\bar{j} = 2,\dots,p$, $j = 1,\dots,\bar{j}-1$, $i = 0,\dots,\mu_j - 1$.
	Substituting \eqref{eq:proof_Cbar_c2} into \eqref{eq:proof_Cbar_c1} yields
	\begin{align}
	\begin{split}
	\vect{\check{c}}_{\bar{j}}\T \vect{t}_{\mu_1+ \dots + \mu_j - i} = \vect{\check{c}}_{\bar{j}}\T \vect{\check{A}}^{i}\vect{t}_{\mu_1 + \dots + \mu_j} - \vect{\check{c}}_{\bar{j}}\T \vect{\check{A}}^{i}\vect{t}_{\mu_1 + \dots + \mu_j} = 0
	\label{eq:proof_Cbar_c3}
	\end{split}
	\end{align}
	for $\bar{j} = 2,\dots,p$, $j = 1,\dots,\bar{j}-1$, $i = 0,\dots,\mu_j - 1$.
\end{enumerate}
Putting together the intermediate results \eqref{eq:proof_Cbar_a}, \eqref{eq:proof_Cbar_b} and \eqref{eq:proof_Cbar_c3} yields
\begin{align}
\begin{split}
\vect{\check{c}}_{\bar{j}}\T \vect{t}_{\mu_1+ \dots + \mu_j - i} = 
\begin{cases}
1 \qquad & \text{if } \bar{j} = j \text{ and } i = \mu_j - 1 \\
0 \qquad & \text{else}
\end{cases}
\quad \text{for}
\quad 
\bar{j} = 1,\dots,p\text{,}\quad j = 1,\dots,p\text{,}\quad i = 0,\dots,\mu_j -1.
\label{eq:proof_Cbar_3}
\end{split}
\end{align}
Substituting \eqref{eq:proof_Cbar_3} into \eqref{eq:proof_Cbar_1} finally proves that $\vect{\bar{C}}$ has the desired form \eqref{eq:transformed_output_matrix}. 

\end{document}